\newcommand{\bI}{ {\boldsymbol I} }
\newcommand{\bk}{ {\boldsymbol k} }
\newcommand{\bK}{ {\boldsymbol K} }
\newcommand{\bs}{ {\boldsymbol s} }
\newcommand{\bw}{ {\boldsymbol w} }
\newcommand{\bx}{ {\boldsymbol x} }
\newcommand{\bX}{ {\boldsymbol X} }
\newcommand{\by}{ {\boldsymbol y} }
\newcommand{\bz}{ {\boldsymbol z} }
\newcommand{\bbeta}{ {\boldsymbol \beta} }
\newcommand{\bgamma}{ {\boldsymbol \gamma} }
\newcommand{\bTheta}{ {\boldsymbol \Theta} }
\newcommand{\bzero}{ {\boldsymbol 0} }
\newcommand{\given}{\,|\,}
\DeclarePairedDelimiter\floor{\lfloor}{\rfloor}
\newtheorem{theorem}{Theorem}[section]
\newtheorem{lemma}[theorem]{Lemma}
\newenvironment{proof}[1][Proof]{\begin{trivlist}
\item[\hskip \labelsep {\bfseries #1}]}{\end{trivlist}}
\newcommand{\qed}{\nobreak \ifvmode \relax \else
      \ifdim\lastskip<1.5em \hskip-\lastskip
      \hskip1.5em plus0em minus0.5em \fi \nobreak
      \vrule height0.75em width0.5em depth0.25em\fi}
\title{Large Multi-scale Spatial Kriging Using Tree Shrinkage Priors}
\author{Rajarshi Guhaniyogi and Bruno Sans\'o}
\begin{document}
\maketitle
\begin{abstract}
We develop a multiscale spatial kernel convolution technique with higher order functions to capture fine scale local features and lower order terms to capture large scale features. To achieve parsimony, the coefficients in the multiscale kernel convolution model is assigned a new class of ``Tree shrinkage prior" distributions. Tree shrinkage priors exert increasing shrinkage on the coefficients as resolution grows so as to adapt to the necessary degree of resolution at any sub-domain. Our proposed model has a number of significant features over the existing multi-scale spatial models for big data. In contrast to the existing multiscale approaches, the proposed approach auto-tunes the degree of resolution necessary to model a subregion in the domain, achieves scalability by suitable parallelization of local updating of parameters and
is buttressed by theoretical support.
Excellent empirical performances are illustrated using several simulation experiments and a geostatistical analysis of the sea surface temperature data from the pacific ocean.
\end{abstract}

\section{Introduction}
Ubiquity of spatially indexed datasets in various disciplines \citep{gelfand2010handbook,cressie2015statistics,banerjee2014hierarchical} has motivated researchers to develop variety of methods and models in spatial statistics. Gaussian processes offer a rich modeling framework and are being widely deployed to help researchers comprehend complex spatial phenomena. However, Gaussian process likelihood computations involve matrix factorizations (e.g., Cholesky) and determinant computations for large spatial covariance matrices that have no computationally exploitable structure. This incurs onerous computational burden and is referred to as the ``Big-N'' problem in spatial statistics.

There are, broadly speaking, two different premises for modeling large spatial datasets. One of them is ``sparsity'', while the other is ``dimension-reduction''.
Sparse methods include covariance tapering (see, e.g., \cite{furrer2012covariance,kaufman2008covariance,du2009fixed,shaby2012tapered}), which introduces sparsity in the Gaussian covariance matrix using compactly supported covariance functions. This is effective for fast parameter estimation and interpolation of the response (``kriging"), but is less suited for more general inference on residual or latent processes due to exorbitantly expensive determinant computation of the sparse covariance matrix. An alternative approach introduces sparsity in the inverse of covariance (precision) matrix using conditional independence assumptions or composite likelihoods (e.g., \cite{vecchia1988estimation}; \cite{rue2009approximate}; \cite{stein2004approximating}; \cite{eidsvik2014estimation}; \cite{datta2015hierarchical}; \cite{guinness2016permutation}). In related literature pertaining to computer experiments, localized approximations of Gaussian process models are proposed, see for e.g. \cite{gramacy2015local}, \cite{zhang2016local} and \cite{park2017patchwork}. This literature is overwhelmingly frequentist, less model based and has different goals compared to the Bayesian spatial literature.


Dimension-reduction methods subsume the popular ``low-rank'' models which express the realizations of the Gaussian process as a linear combination of $r$ basis functions (see, e.g., \cite{higdon2002space,stein2007spatial,banerjee2008gaussian,cressie2008fixed,crainiceanu2008bivariate,
finley2009hierarchical,lemos2009spatio}), where $r<<n$. This leads to a flexible class of models, but otherwise possess a low-rank structure that enables likelihood evaluation by solving $r\times r$ linear systems instead of $n\times n$.  The algorithmic cost for model fitting decreases from $O(n^3)$ to $O(nr^2 +r^3)$. However, when $n$ is large, empirical investigations suggest that $r$ must be fairly large to adequately approximate the parent process so that $nr^2$ flops becomes exorbitant. Furthermore, low rank models perform poorly when neighboring observations are strongly correlated and the spatial signal dominates the noise \citep{stein2014limitations}. Improvements on low-rank models with properly designed basis functions (e.g. \cite{guhaniyogi2011adaptive}) have appeared in the recent past. However, these improvements detract from the computational advantages.

Some variants of dimension-reduction methods partition the large spatial data into subsets containing fewer observations, run Gaussian processes in different subsets followed by combining inference from subsets, see e.g. \cite{gramacy2012bayesian}, \cite{guhaniyogi2017}.
Another important class of models aimed at modeling the spatial surface at multiple scales; finer variations at the local scale and overall trend at the global scale. Most of the geophysical processes naturally tend to
have a multiscale character over space that requires statistical methods to allow for potentially complicated multiscale spatial dependence beyond a simple parametric model. However, literature on Bayesian multiscale spatial models for big data is quite insufficient.

Our approach combines the representation of a random field using a multiresolution basis with coefficients modeled using a newly developed \emph{multiscale tree shrinkage prior}. To be more precise, the spatial surface $w(\bs)$ is viewed as the sum of $R$ independent processes
$w(\bs)=\sum_{r=1}^{R} w_r(\bs)$, the $r$-th process $w_r(\bs)$ corresponds to the $r$-th resolution, and
each $w_r(\bs)$ being modeled using a discrete kernel convolution approach,
\begin{align}\label{multi}
w_r(\bs)=\sum_{j=1}^{J(r)}\kappa_{j,r}(\bs)\beta_j^r,
\end{align}
where $\{\kappa_{j,r}(\bs)\}_{j=1}^{J(r)}$, $r\in\{1,...,R\}$, is a set of basis of functions for the $r$th resolution and $\beta_j^r$'s are corresponding coefficients. We use families of radial basis functions of minimal order (see Section 2) kept on a regular grid with increasing resolutions. These radial basis functions have compact support that facilitates significant computational gain. $\beta_j^r$'s play an important role in determining whether a sub-domain requires modeling at the finest resolution
or at coarse resolutions. A new class of \emph{multiscale tree shrinkage prior} is developed to adequately model $\beta_j^r$'s in various sub-domains at different resolutions.

It is noteworthy that a few other important article on multiscale spatial models for big data have already appeared in the literature, see e.g.
\cite{nychka2015multiresolution,katzfuss2016multi,katzfuss2013bayesian} and references therein. We derive a number of very important advantages over the existing literature. Firstly, the most important difference of our approach with \texttt{LatticeKrig} model \citep{nychka2015multiresolution} is that the newly proposed multiscale tree shrinkage prior is equipped to impart increasing shrinkage on basis coefficients as resolution increases. This effectively leads to a continuous analogue to selecting the number of resolutions necessary for modeling a sub-domain. The idea of the tree shrinkage prior is novel in its own right with possible applications anticipated in statistical genomics and neuroscience, for example identifying main effects versus interaction effects in genetic studies. Secondly, unlike \texttt{LatticeKrig}, the proposed multiscale approach incorporates data dependent choice of the kernel width. As a result, similar performance in terms of surface interpolation is available from these models with the proposed multiscale approach using much less number of knots. Thirdly,
the proposed multiscale structure can be naturally embedded in a hierarchical structure to model non-Gaussian data.
Fourthly, this article characterizes the function space of the fitted spatial surface and show asymptotic result on consistency of the posterior distribution of the same. Finally, judicious choice of the compact basis functions and the computational strategy described in Section~\ref{computation} evoke extremely rapid Bayesian estimation that only involves inverting a large number of small matrices in parallel.

The remainder of the manuscript evolves as follows. Section~\ref{MSK} outlines the multiscale kernel convolution model development including the choice of knots, basis functions, basis coefficients and priors on them. Section~\ref{posterior_inference} discusses posterior computation strategies and computation complexities. Theoretical insights on asymptotic properties of the posterior distribution of spatial surface is offered in Section~\ref{theory}. Detailed simulation studies are shown in Section~\ref{simulation}. Section~\ref{SST} details out analysis of a massive sea surface temperature data in pacific ocean. Finally, Section~\ref{conclusion} discusses what the newly developed multiscale model achieves, and proposes a number of future directions to explore.

\section{Multiscale Spatial Kriging}\label{MSK}
\subsection{Kernel convolutions as approximations to Gaussian processes}
Let $\{w(\bs):\bs\in\mathcal{D}\}$ be a spatial field of interest in the continuous domain $\mathcal{D}\subseteq \mathbb{R}^d$, $d\in\mathbb{N}^{+}$. We assume the true spatial process $w(\bs)$ follows a Gaussian process. One may construct a Gaussian process
$w(\bs)$ over $\mathcal{D}$ by convolving a continuous white noise process $u(\bs)$, $\bs\in\mathbb{D}$ with a smoothing kernel $K(\bs,\phi)$ ($\phi$ might be space varying)
so that $w(\bs)=\int K(\bs-\bz,\phi)u(\bz)d\bz$, as proposed by \cite{higdon2002space}. The resulting covariance function for $w(\bs)$ is fully determined
by the kernel $K(\cdot)$ such as
\begin{align}\label{discrete}
\mbox{cov}(w(\bs),w(\bs'))=\int K(\bs-\bz,\phi)K(\bs'-\bz,\phi)d\bz.
\end{align}
A discrete approximation of (\ref{discrete}) is obtained by sampling the convolved processes on a grid. Letting $\bs_1^*,...,\bs_{J}^*$ be a set of knots in $\mathcal{D}$, a discrete approximation of $w(\bs)$ is given by
\begin{align}\label{DCT}
\theta(\bs)=\sum_{j=1}^{J}K(\bs-\bs_j^*,\phi)u_j,
\end{align}
where $u_j$'s are basis coefficients. The $J$ knots are typically placed in a grid in $\mathcal{D}$, though other placements of knots have appeared in the literature. Varying the choice of the kernel functions and coefficients $u_j$, a rich variety of processes emerge from (\ref{DCT}). Following \cite{lemos2009spatio}, we term (\ref{DCT}) as Discrete Process Convolutions (DCT). When $J$ is small DCT provides computationally convenient approximation of the Gaussian process $w(\bs)$. However, Smaller $J$ would greatly reduce approximation accuracy, while moderately large $J$ exacerbates computational burden. The computational challenges cannot be solved by brute-force use of high-performance computing systems, and approximations or simplifying assumptions are necessary. One compelling idea to both reduce computation and increase approximation accuracy may come from using DCT at multiple scales with proper choice of kernel functions and basis coefficients. Next few sections carefully develop a multiscale-DCT model.

\subsection{Partition of domain and choice of knots}\label{domknots}
To define the multiscale-DCT with resolution $R$, we iteratively partition $\mathcal{D}$
up to level $R$. Let at the lowest level, one partitions  $\mathcal{D}$ into $J(1)$ subsets $\mathcal{D}_1,...,\mathcal{D}_{J(1)}$. In the second level, each $\mathcal{D}_i$ undergoes $P$ partitions so that the total number of partitions in the second level is $PJ(1)$. Likewise, let in the $(r-1)$th level the set of partitions can be described as $\{\mathcal{D}_{i_1,..,i_{r-1}}:i_1\in\{1,2,..,J(1)\}, i_2,...,i_{r-1}\in\{1,...,P\}\}$. In the $r$th level each $\mathcal{D}_{i_1,..,i_{r-1}}$
is partitioned into $P$ subsets $\mathcal{D}_{i_1,..,i_{r-1},1}$,..,$\mathcal{D}_{i_1,..,i_{r-1},P}$, so that
\begin{align}
\mathcal{D}_{i_1,..,i_{r-1}}=\bigcup_{i_r=1}^{P}\mathcal{D}_{i_1,..,i_{r-1},i_r},\:\:\mathcal{D}_{i_1,..,i_{r-1},s}\bigcap \mathcal{D}_{i_1,..,i_{r-1},s'}=\phi,\:\:\forall\:s\neq s'.
\end{align}
Therefore, the number of partitions at the $r$th resolution is $J(r)=P^{r-1}J(1)$.
In one dimensional ($d=1$) case, $P=2$, i.e. bisection method is adopted to partition each subset for the next resolution. This naturally implies that the number of partitions at the $r$th level is $J(r)=2^{r-1} J(1)$. In the two dimensional examples, any subset at a resolution is divided into $4$ equal subsets, i.e. $J(r)=4^{r-1}J(1)$. Partitioning of the domain can be envisioned as formation of a tree, with sub-domains $\mathcal{D}_{i_1,...,i_r}$'s as nodes of the tree. Lower and higher resolutions correspond to the upper and lower nodes of this tree. $\mathcal{D}_1$,...,$\mathcal{D}_{J(1)}$ correspond to uppermost nodes of the tree. $P$ branches emerge from each of these nodes leading to
$P^2$ nodes in the second level of the tree and this process continues. Indeed, for any $i_1,...,i_r$, $1\leq r\leq R$, we define $Subtree(\mathcal{D}_{i_1,..,i_{r}})$ by
\begin{align}\label{subtree}
Subtree(\mathcal{D}_{i_1,..,i_{r}})=\{\mathcal{D}_{i_1,..,i_{r}}\}\cup_{j=1}^{R-r-1}\{\mathcal{D}_{i_1,..,i_{r},i_{r+1},..,i_{r+j}}: i_{r+1},...,i_{r+j}\in\{1,...,P\}\}\cup\{\mathcal{D}_{i_1,..,i_{R}}\}.
\end{align}
$Subtree(\mathcal{D}_{i_1,..,i_{r}})$ consists of all sub-domains of $\mathcal{D}_{i_1,..,i_{r}}$ in higher than $r$th resolution, including itself.
Evidently, $Subtree(\mathcal{D}_{i_1,..,i_{R}})=\mathcal{D}_{i_1,..,i_{R}}$. On a similar note, we also define the \emph{father} node of $\mathcal{D}_{i_1,..,i_{r}}$ as the node $\mathcal{D}_{i_1,..,i_{r-1}}$.

Defining multiscale-DCT also requires choosing a set of knot points at every level. We place knots in the center of every partition at a resolution. To be more precise, the knots $\bs_1^1,...,\bs_{J(1)}^1$ in the first level is placed at the centers of $\mathcal{D}_1,...,\mathcal{D}_{J(1)}$. Likewise, knots $\bs_1^r,...,\bs_{J(r)}^r$ are kept at the centers of the partitions at the $r$th level. Therefore, there is a one-one correspondence between the set of knots and the set of partitions of $\mathcal{D}$. Henceforth, we will interchangeably use $Subtree$ and $Father$ of a sub-domain with $Subtree$ and $Father$ of the knot that resides at the midpoint of that sub-domain, e.g. if $\bs_j^r\in\mathcal{D}_{i_1,...,i_r}$, then $Subtree(\bs_j^r)$ and $Father(\bs_j^r)$ are synonymous with $Subtree(\mathcal{D}_{i_1,...,i_r})$ and $Father(\mathcal{D}_{i_1,...,i_r})$ respectively.
It is be noted that the indexing set of knots is a bit different from the indexing set of partitions and they require reconciliation.
The $j$th knot at the $r$th resolution $\bs_j^r$, $j=1,..,J(r)$, belongs to the sub-domain $\mathcal{D}_{i_1,...,i_r}$ if $j=\sum_{l=1}^{r-1}(i_l-1)P^{r-l}+i_r$. With this notation, $\bs_k^{r-1}$ is the father node of $\bs_j^r$ iff
$k=\sum_{l=1}^{r-2}(i_l-1)P^{r-l}+i_{r-1}$, i.e. $k=\floor{\frac{j-1}{P}}+1$, where $\floor{x}$ is the greatest integer less than $x$.
\begin{figure}[h]
\begin{center}
\subfigure[knot placement: one dimension]{\includegraphics[width=8cm]{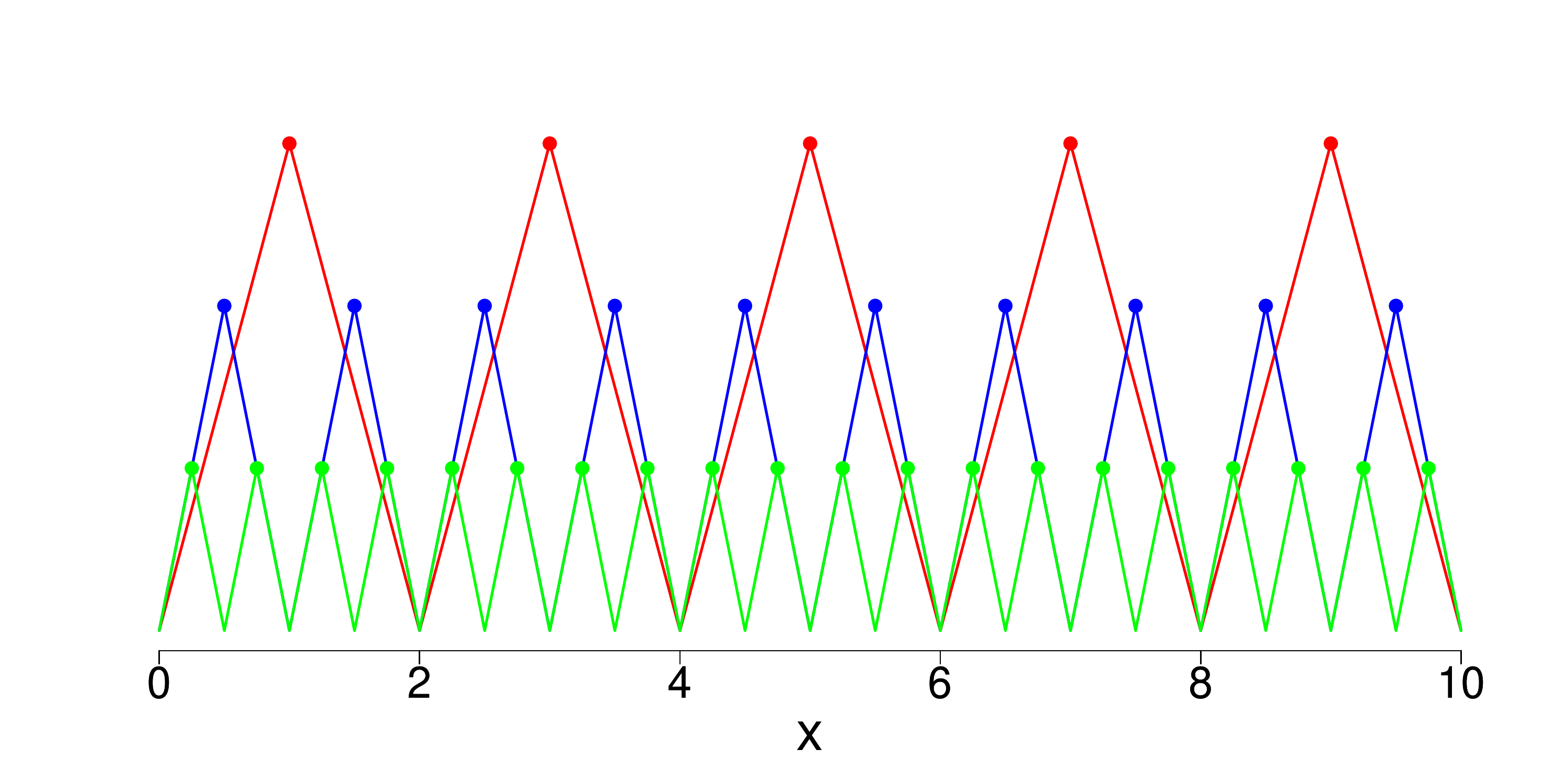}}\\
\subfigure[knot placement: resolution 1 (2d)]{\includegraphics[width=6cm]{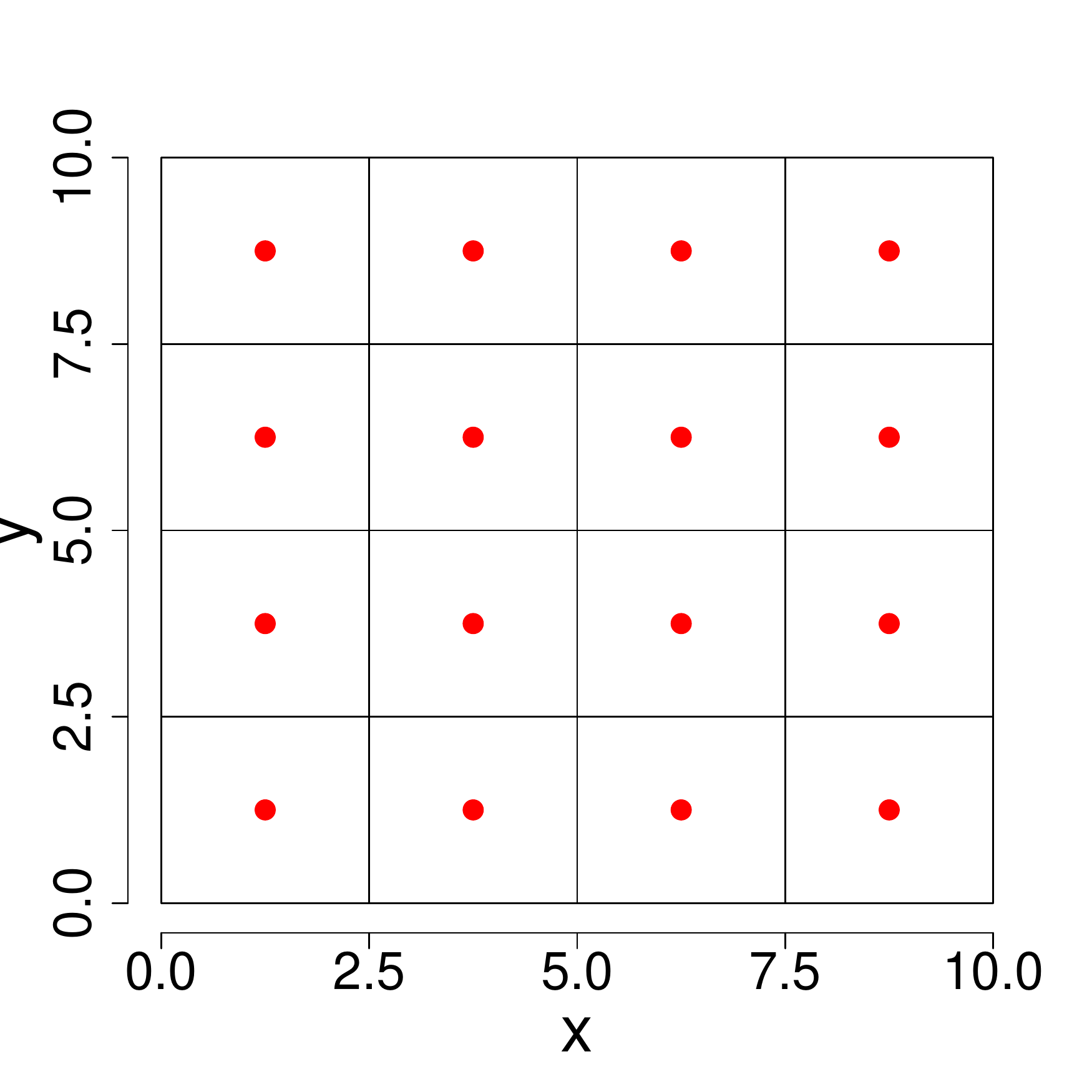}}
\subfigure[knot placement: resolution 2 (2d)]{\includegraphics[width=6cm]{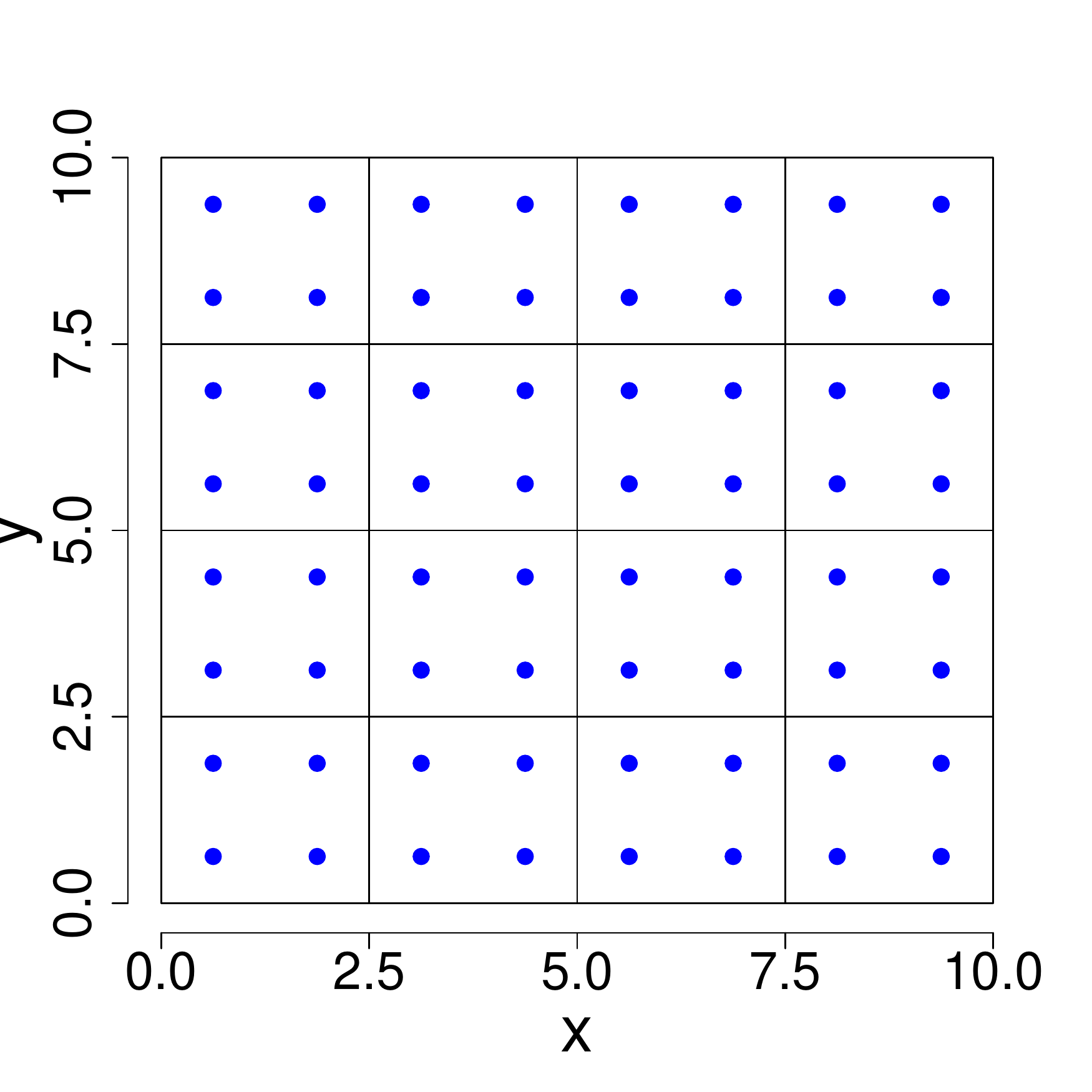}}
\end{center}
\caption{(a) placement of knots in one dimension for three resolutions. Knots in resolution 1 are shown in the upper level. Middle and lower level show knots in resolutions 2 and 3 respectively. For visualization in one dimension, we fix $R=3$, $J(1)=5$. (b) Shows placement of knots in resolution 1 for two dimensions; (c) shows placement of knots in resolution 2 for two dimensions. For better visualization, we keep $R=2$, $J(1)=16$ in two dimensions.}\label{knots}
\end{figure}

To give examples of domain partitioning and knots, consider the one dimensional case with the spatial domain of interest $[h_1,h_2]$. In the first resolution, the domain is partitioned into $J(1)$ intervals of equal length, i.e. each interval is of length $\delta=(h_2-h_1)/J(1)$. The knots in the first resolution are placed at the midpoint of each interval so that the spacing between two successive knots is always $\delta$. At the second resolution, each interval is partitioned into equal length intervals with knots in resolution 2 placed right at the midpoints of these intervals. Therefore,
the space between two successive knots is reduced to $\delta/2$. This process continues iteratively leading to the space between two successive knots at resolution $r$ as $\delta/2^{r-1}$. For $d=2$, let the domain of interest be $[h_1,h_2]\times[h_3,h_4]$. The first resolution divides the area into $h_x\times h_y$ equi-dimensional rectangles with knots placed in the center of each rectangle. Therefore, the number of knots in the first resolution is $J(1)=h_x\times h_y$. In resolution 2, every rectangle in the first resolution is divided into four congruent rectangles. Knots in the second resolution are placed at the centers of these new rectangles. Clearly, the distance between two horizontally adjacent knots is $(h_2-h_1)/h_x$ and two vertically adjacent knots is $(h_4-h_3)/h_y$, in the first resolution. At the $r$th resolution these distances decrease to $2^{-r+1}(h_2-h_1)/h_x$ and $2^{-r+1}(h_4-h_3)/h_y$ respectively.

Figure~\ref{knots} shows the domain partitioning and the set of knots for both one dimensional and two dimensional applications. For the visual illustration, we restrict to $R=3$, $h_1=0,h_2=10,h_3=0,h_4=10$, $J(1)=5$ in one dimension. For two dimensions, we restrict $R=2$, $h_x=4,h_y=4$, $J(1)=16$ for better visualization. Hereon, we fix the template of domain partitions and the placement and number of knots in each partition. We offer some more discussion on them in the conclusion and future work section.

\subsection{Multiscale spatial process with radial basis functions}
We model the spatial effects by a multi-scale DCT with $R$ resolutions, $r$th resolution being modeled by a DCT with kernel $K(\cdot,\cdot,\phi_r)$, knots $\bs_1^r,...,\bs_{J(r)}^r$ and coefficients $\beta_1^r,...,\beta_{J(r)}^r$, $r=1,..,R$. To elaborate
it further, the spatial surface $w(\bs)$ is written as $w(\bs)=\sum_{r=1}^{R}w_r(\bs)$, such that
\begin{align}\label{rthscale}
w_r(\bs)=\sum_{j=1}^{J(r)}K(\bs,\bs_j^r,\phi_r)\beta_j^r.
\end{align}
Contrary to the one scale DCT, multiscale DCT captures spatial variability at multiple scales. The lower resolutions capture variability at
large distances while the finer local level variabilities are captured by higher resolutions. Intuitively, the implication is that
one needs more basis functions in one scale DCT compared to multiscale DCT to draw similar inference. We will formally discuss this issue in simulation studies.

The choice of the kernel function $K(\cdot,\cdot,\phi_r)$ is crucial for estimating the spatial variability at multiple scales. In the context of
the ordinary one resolution kernel convolution literature, \cite{lemos2009spatio,cressie2008fixed} proposed to use Bezier kernels which are continuous but not differentiable. In the multiscale literature, \cite{nychka2015multiresolution} uses a Wendland kernel that is four times continuously differentiable. Let $\kappa$ be a Wendland polynomial functions \citep{wendland2004scattered}, supported on [0,1], having the form
\begin{align}\label{compact}
\kappa(z)=(1-z)_{+}^{l+1}(1+(l+1)z),
\end{align}
where $(1-z)_{+}=(1-z)$ if $0<z<1$ and $=0$ otherwise and $l=\floor*{d/2}+2$. For our proposed
approach, we choose kernel function $K$ defined as
\begin{align}\label{eq:kernel}
K(\bs,\bs_j^r,\phi_r)=\kappa\left(\frac{||\bs-\bs_j^r||}{\phi_r}\right)=\left(1-\frac{||\bs-\bs_j^r||}{\phi_r}\right)_{+}^{l+1}\left[1+(l+1)\frac{||\bs-\bs_j^r||}{\phi_r}\right].
\end{align}
Geometrically, the kernel function consists of bumps centered at the node points with interpolation of the spatial surface at $\bs$ in the $r$th resolution is governed by knots located in $B_{\phi_r}(\bs)$, where $B_{\nu}(\bs)$ is the Euclidean ball of radius $\nu$ around $\bs$.
Section~\ref{computation} describes computational advantages derived from the compact support of this kernel.

Note that $\kappa$ is a Wendland polynomial function supported on $[0,1]$.
\cite{wendland2004scattered} ensures that $\kappa$ is positive definite
which seems to be an attractive feature when this function is used for interpolation.
Further, $\kappa(z)\in C^2$ and it is the positive definite compactly supported
polynomial of minimal degree for a given dimension $d$ that possesses continuous
derivatives up to second order. Theorem~\ref{RKHS} characterizes the space of functions of the form $w_r(\bs)$ spanned by the basis functions $K(\bs,\cdot,\phi_r)$. Proof of the Theorem~\ref{RKHS} is given in the Appendix.
\begin{theorem}\label{RKHS}
Consider the Reproducing Kernel Hilbert Space (RKHS) of the space of functions
\begin{align*}
\mathcal{H}_r=Span\left\{K(\bs,\cdot,\phi_r)\right\}
\end{align*}
spanned by the kernel at the $r$th resolution. Then $\mathcal{H}_r=\mathcal{S}^{d/2+3/2}(\mathcal{R}^d)$, where
\begin{align*}
\mathcal{S}^{d/2+3/2}(\mathcal{R}^d)=\{f\in L_2(\mathcal{R}^d)\cap C(\mathcal{R}^d): \hat{f}(\cdot)(1+||\cdot||^2)^{(d+3)/4}\in L_2(\mathcal{R}^d)\},
\end{align*}
is the Sobolev space of order $d/2+3/2$, and $\hat{f}(\cdot)$ is the Fourier transform of
$f(\cdot)$.
\end{theorem}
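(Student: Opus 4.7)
The plan is to exploit the translation-invariant/radial structure of the kernel and reduce the RKHS characterization to a Fourier-domain decay estimate for the Wendland function $\kappa$. Writing $\Phi_r(\bs-\bs') := K(\bs,\bs',\phi_r) = \kappa(||\bs-\bs'||/\phi_r)$, the kernel depends only on $\bs-\bs'$, so by the standard Bochner--Aronszajn characterization of RKHS for positive definite, translation-invariant kernels on $\mathbb{R}^d$, the native space is
\begin{equation*}
\mathcal{H}_r = \Bigl\{ f \in L_2(\mathbb{R}^d)\cap C(\mathbb{R}^d): \int_{\mathbb{R}^d} \frac{|\hat{f}(\omega)|^2}{\hat{\Phi}_r(\omega)}\, d\omega < \infty \Bigr\},
\end{equation*}
with inner product weighted by $1/\hat{\Phi}_r$. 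Thus the theorem reduces to showing that $\hat{\Phi}_r(\omega)$ is bounded above and below (up to positive constants depending on $\phi_r$) by $(1+||\omega||^2)^{-(d+3)/2}$: if that holds, the defining integral of $\mathcal{H}_r$ is equivalent to $\int |\hat{f}(\omega)|^2 (1+||\omega||^2)^{(d+3)/2}\,d\omega$, which is precisely the norm of $\mathcal{S}^{d/2+3/2}(\mathbb{R}^d)$.

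Next I would establish the Fourier-decay statement for the Wendland polynomial $\kappa(z)=(1-z)_+^{l+1}(1+(l+1)z)$ with $l=\lfloor d/2 \rfloor + 2$. Since $\kappa$ is radial, its $d$-dimensional Fourier transform is given by the Fourier--Bessel (Hankel) transform, and this particular choice of $\kappa$ is the minimal-degree, compactly supported, positive-definite, $C^2$ Wendland function in dimension $d$. For this family, Wendland (2004, Thm.~10.35) shows that the Fourier transform is strictly positive and satisfies the two-sided bound
\begin{equation*}
c_1 (1+||\omega||^2)^{-(d+3)/2} \le \hat{\kappa}(||\omega||) \le c_2 (1+||\omega||^2)^{-(d+3)/2},
\end{equation*}
for positive constants $c_1, c_2$. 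The dilation by $\phi_r$ only rescales the frequency variable and multiplies by $\phi_r^d$, so the same two-sided bound (with $\phi_r$-dependent constants) transfers to $\hat{\Phi}_r$. I would cite this directly rather than redoing the Hankel transform calculation.

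Combining the two steps, the RKHS norm is equivalent to the Sobolev norm of order $s = (d+3)/2 = d/2 + 3/2$, giving the set-theoretic equality $\mathcal{H}_r = \mathcal{S}^{d/2+3/2}(\mathbb{R}^d)$ claimed in the theorem. The main obstacle is the Fourier-decay estimate for $\hat{\kappa}$: verifying the upper and (especially) the lower bound requires the explicit recursion/integral representation that characterizes Wendland's family, and is the step that relies most heavily on external results. The remaining pieces (the Bochner/Aronszajn characterization, equivalence of weighted $L_2$ norms, identification of the Sobolev space) are standard and follow directly once the spectral bound is in hand.
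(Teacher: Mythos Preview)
Your proposal is correct and follows essentially the same route as the paper: the paper's proof invokes Theorems 10.10 and 10.35 of Wendland (2004) for the two-sided Fourier decay bound on $\hat{\kappa}$ and then applies Corollary 10.13 of the same reference (which packages the Bochner--Aronszajn native-space characterization you spelled out) to identify the RKHS with the Sobolev space. Your version is slightly more explicit about the dilation by $\phi_r$ and the Bochner step, but the argument is the same.
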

\textbf{Remark:} The result establishes that the sample paths of $w_r(\bs)$ belong to $\mathcal{S}^{d/2+3/2}(\mathcal{R}^d)$. Roughly speaking, for integer $\zeta$, $\mathcal{S}^{\zeta}$ contains functions whose derivatives upto $\zeta$-th order are continuously differentiable. Thus, $w_r(\bs)$ constructed in this way ought to provide continuously differentiable realizations of the spatial surface a priori.

The choice of the scale parameter $\phi_r$ for the $r$th resolution follows from several considerations. Since the kernels in lower resolutions
are meant to capture long range variabilities, one naturally imposes the constraint
\begin{align}\label{phieq}
\phi_1>\phi_2>\cdots>\phi_R>0.
\end{align}
Secondly, given $\beta_j^r$, $j=1,...,J(r)$, $r=1,...,R$, $\phi_r$ determines the set of
knots in the neighborhood of $\bs$ for interpolating the spatial surface at $\bs$. One
could possibly keep $\phi_r$ as a parameter and update them using MCMC. Our detailed
investigation reveals that such an approach adds unnecessary computational burden with no
apparent inferential advantage. Therefore, this article fixes
$\phi_r=\eta||\bs_j^r-\bs_{j-1}^r||$, $\eta>0$. $\eta$ is a tuning parameter that
determines the computational advantage vis a vis long range spatial dependence between
observations. Since $\eta$ is not a parameter of interest, the present article does not attempt to make full scale Bayesian inference on $\eta$. Rather at each step of the MCMC iteration posterior likelihood is maximized over a grid of $\eta$. We further elaborate it in Section 3.1.

\subsection{Mutiscale spatial regression model}
Our proposed Mutiscale spatial model typically assume, at location $\bs\in\mathcal{D}$, a response variable $y(\bs)\in\mathcal{R}$ along with a $p\times 1 $ vector of spatially referenced predictors $\bx(\bs)$ which are associated through a spatial regression model as
\begin{align}\label{multiscale}
y(\bs)=\bx(\bs)'\bgamma+\sum_{r=1}^{R}\sum_{j=1}^{J(r)}K(\bs,\bs_j^r,\phi_r)\beta_j^r+\epsilon(\bs),\:\:\epsilon(\bs)\sim N(0,\sigma^2),
\end{align}
where $\bgamma$ is the $p\times 1$ vector of regression coefficient. The medium and short
range spatial variability of $y(\bs)$ is determined by the multiscale DCT term, while
$\epsilon(\bs)$ adds a jitter that corresponds to unexplained micro-scale variability or
measurement error, with $\sigma^2$ as the error variance.

\subsection{Mutiscale shrinkage prior on $\beta_j^r$}
Once the model formulation is complete, attention turns to assigning prior distributions on $\beta_j^r,\bgamma,\sigma^2$. While the prior specification on $\bgamma$ and $\sigma^2$ is straightforward, specifically $\bgamma$ is assigned a noninformative prior and $\sigma^2\sim IG(c,d)$, constructing prior
distribution on $\beta_j^r$ requires a bit of reflection. Note that the local variability
within the spatial domain varies in relation to the sub-domains. There are regions within the domain which exhibit small scale spatial variability, thereby required to be modeled by the higher resolutions. On the other hand, spatial variability is less prominent in some regions, which practically do not require higher resolutions for modeling. Mathematically, this amounts to setting $\beta_j^r=0$ corresponding to the knots $s_j^r$ located in the latter regions. It is also natural to assume that if $r$th resolution deemed unnecessary to model the surface in a sub-domain, any $l$th resolution for $l>r$ should be unnecessary too to model the same subregion. For $\bs_j^r\in\mathcal{D}_{i_1,...,i_r}$, define,
\begin{align*}
\mathcal{B}_{j,r}^{Subtree}=\left\{\beta_{k}^l: l\geq r, \bs_{k}^l\in Subtree(\mathcal{D}_{i_1,...,i_r})\right\}.
\end{align*}
Thus $\mathcal{B}_{j,r}^{Subtree}$ is the set of coefficients corresponding to basis functions centered at knots in $Subtree(\mathcal{D}_{i_1,...,i_r})$. As per our discussion, if modeling small scale spatial variation in the subregion $\mathcal{D}_{i_1,...,i_r}$ does not require resolution $r$, any higher resolution deem unnecessary too.
This leads to condition C\\
\textbf{Condition C:} $\beta_j^r=0$ implies $\beta_k^l=0$, where $\beta_k^l\in\mathcal{B}_{j,r}^{Subtree}$.\\
The problem of estimating $\beta_j^r$'s finds equivalence in the variable selection literature in high dimensional regression. The goal in variable selection literature lies in identifying predictors not related to the response, equivalently the predictors having zero coefficients. A rich variety of methods have been proposed ranging from penalized optimization methods, such as Lasso \citep{tibshirani1996regression} and elastic net \citep{zou2005regularization}, to Bayesian variable selection or shrinkage methods. Penalized optimization is computationally efficient in identifying unimportant predictors even in the presence of large number of predictors, but suffers due to their inability to produce accurate characterization of uncertainty. Besides, penalized optimization results are highly sensitive to the adhoc choice of tuning parameters. The Bayesian approach is attractive in its probabilistic characterization of uncertainty for regression coefficients in high dimensions and for the resulting predictions.
The most popular artifact employed in the Bayesian high dimensional regression for variable selection is the wide class of
spike and slab prior distributions on predictor coefficients. The widespread usage of spike and slab priors is a consequence of its easy interpretability and relatively simple computation. It is be noted that \textbf{Condition C} hinders standard application of a spike and slab prior on $\beta_j^r$. One can possibly build a new class of spike-and-slab prior over the traditional spike-slab prior for selective predictor inclusion \citep{george1993variable,geweke2004getting,clyde1996prediction} respecting the constraints imposed by \textbf{Condition C}. However,
spike and slab prior faces notorious mixing issues and consequently inaccurate inference for more than a few hundred predictors. This has motivated us to derive a continuous shrinkage prior that does not set $\beta_j^r=0$, but imposes
a stochastic ordering between $\beta_j^r$ along resolutions a priori. The literature on high dimensional regression have consistently found that
shrinkage-promoting priors are more effective in practice on real (natural) data than exact-sparsity-promoting models, like the spike-slab  setup.

An impressive variety of Bayesian shrinkage priors for ordinary high dimensional regressions with
scalar/vector response on high dimensional vector predictors has been proposed in recent
times, see for example \cite{armagan2013generalized,hans2009bayesian,park2008bayesian,polson2012local,carvalho2009handling} and references therein. Shrinkage priors are based on the
principle of artfully shrinking predictor coefficients of unimportant predictors to zero, while
maintaining proper estimation and uncertainty of the important predictor coefficients. However,
the literature on shrinkage priors that impose increasing shrinkage along resolutions is
quite insufficient. This article introduces a \emph{multiscale tree shrinkage
prior} to achieve this objective. It proposes
\begin{align}\label{multsh}
\beta_j^r &\sim N(0,\alpha_j^r)\nonumber\\
\alpha_j^1=\delta_1^{-1}, \alpha_j^2 &=\delta_1^{-1}\delta_{j,2}^{-1},
\alpha_j^r=\alpha_{\floor{\frac{j-1}{P}}+1}^{r-1}\delta_{j,r}^{-1}\nonumber\\
\delta_1\sim Gamma(2,1), &\delta_{j,r}\sim Gamma(c,1), c>2.
\end{align}
$\delta_{j,r}^{-1}$'s are stochastically smaller than $1$ implying increasing shrinkage apriori along a branch. In fact,
$E[\beta_j^r]=0$ and $Var[\beta_j^r]=\frac{1}{(c-1)^{r-1}}\rightarrow 0$, as $r\rightarrow\infty$, apriori. Thus the prior distribution imposes strong apriori belief of having a parsimonious model with small number of resolutions. The proposed prior offers easy posterior updating with closed form conditional posterior distributions for all the parameters, as is discussed in the next section.

\section{Posterior computation and inference}\label{posterior_inference}
This section describes posterior computation and inference for multiscale DCT. The main
task for inference remains that of obtaining the
posterior distribution of the unknown coefficients $\beta_j^r$ and $\delta_{j,r}$ $j=1,...,J(r)$ and $r=1,...,R$, $\bgamma$ and $\sigma^2$.
The formulation of multiscale DCT is simple, so that all the parameters allow simple Gibbs sampling updates. Once posterior distributions of the parameters are available, they are employed to estimate interpolation of the residual surface and perform spatial predictions. By crucially exploiting the conditional independence among several parameters and multi-resolution structure of the problem, we obtain inference with excellent time and memory complexity (Section~\ref{computation} and \ref{simulation}), can take full advantage of distributed-memory systems with a large number of nodes (Section~\ref{computation}), and is thus scalable to large spatial datasets.

\subsection{Posterior computation}
We proceed to do parametric inference with data $(y(\bs_i),\bx(\bs_i))_{i=1}^n$ at locations $\mathcal{S}=\{\bs_1,...,\bs_n\}$. Stacking responses and predictors across locations we obtain, $\by=(y(\bs_1),...,y(\bs_n))'$, $\bX=[\bx(\bs_1):\cdots :\bx(\bs_n)]'$. Let $\bK$ be an $n\times (J(1)+\cdots+J(R))$ matrix whose $i$th row is
given by $(K(\bs_i,\bs_1^1,\phi_1),...,K(\bs_i,\bs_R^{J(R)},\phi_R))'$. Further assume $\bbeta^r=(\beta_1^r,...,\beta_{J(r)}^r)'$ and $\bbeta=(\bbeta^1,...,\bbeta^R)'$, $y_{i,r,j}=y_i-\sum_{(k_1,k_2)\neq (j,r)}K(\bs_i,\bs_{k_1}^{k_2},\phi_{k_2})\beta_{k_1}^{k_2}$, $\by_{r,j}=(y_{1,r,j},...y_{n,r,j})'$, $\bK_{r,j}=(K(\bs_1,\bs_j^r,\phi_r),...,K(\bs_n,\bs_j^r,\phi_r))'$. The full conditional distributions of $\bgamma$, $\sigma^2$, $\beta_j^r$ and $\delta_{j,r}$ are readily available in closed form and are given by
\begin{itemize}
\item $\bgamma|-\sim N((\bX'\bX)^{-1}\bX'(\by-\bK\bbeta),\sigma^2(\bX'\bX)^{-1})$
\item $\sigma^2|-\sim IG\left(\frac{n}{2}+c,d+\frac{1}{2}||\by-\bX\bgamma-\bK\bbeta||^2\right)$
\item $\beta_j^r|-\sim N\left(\frac{\frac{\bK_{r,j}'\by_{r,j}}{\sigma^2}}{\frac{1}{\alpha_j^r}+\frac{\bK_{r,j}'\bK_{r,j}}{\sigma^2}},\frac{1}{\frac{1}{\alpha_j^r}+\frac{\bK_{r,j}'\bK_{r,j}}{\sigma^2}}\right)$.
\item Recall the definition of father node in Section~\ref{domknots}. Additionally define $father^2(\bs_j^r)$ as the father node of the father node of $\bs_j^r$. Similarly, $father^3,...,father^R$ node are defined. Let $\alpha_{j,r,-1}=\prod\limits_{l=r}^R\delta_{father^{r+1-l}(\bs_j^l),l-1}$ and $\alpha_{j,1,-1}=1$. Then\\
 $\delta_1|-\sim Gamma\left(1+\frac{J(1)+\cdots+J(R)}{2},1+\frac{1}{2}\sum_{r=1}^{R}\sum_{j=1}^{J(r)}
    \left[(\beta_{j}^{r})^2/\alpha_{j,r,-1}\right]
    \right)$
\item Let $\alpha_{k,l,-r,-j}=\delta_1\prod\limits_{h=l}^{r+2}\delta_{father^{l+1-h}(\bs_k^h),h-1}\prod\limits_{h=r}^{2}\delta_{father^{r+1-h}(\bs_j^h),h-1}$,
    $\alpha_{j,r,-r,-j}=1$. Then\\
    $\delta_j^r|-\sim Gamma\left(c+\frac{\#\bbeta_{j,r}^{Subtree}}{2},1+\frac{1}{2}\sum\limits_{l\geq r,\bs_k^r\in Subtree(\bs_j^r)}(\beta_k^l)^2/\alpha_{k,l,-r,-j}\right)$ for $r>1$.
\item Finally at each iteration, joint posterior distribution is maximized over a discrete grid of $\eta$ values, $\eta\in \{1,...,h_{\eta}\}$, $h_{\eta}$ is an integer. In all simulation studies and in the real data analysis, we never found the maximization of the posterior over $\eta$ to occur for $\eta$ values more than $5$. Thus, we fix $h_{\eta}=5$ for all empirical investigations.
\end{itemize}
\subsection{Distributed computation and computational complexities}\label{computation}
An important advantage of the multiscale DCT is that it facilitates distributed computation
with little communication overhead at a large number of nodes, each only dealing with a small
subset of the data. This section describes the distributed computing algorithm as well as the associated
computation complexity.

To begin with, one must acknowledge that posterior updating of $\bgamma, \sigma^2,\delta_{j,r},\delta_1$ can be
carried out rapidly without having to store the entire data in one processor. The main computational difficulty
comes from updating of $\bbeta$. Single updating of $\beta_j^r$ introduces too much autocorrelation, while
joint updating of $\bbeta$ requires inverting $(\sum_{r=1}^{R}J(r))\times (\sum_{r=1}^{R}J(r))$ matrix which is
infeasible. The use of compactly supported basis functions offers an excellent solution by carefully exploiting conditional independence between blocks of $\bbeta$.
For $m=1,...,J(1)$, define the \emph{neighborhood function} $\mathcal{N}(m)$ of $m$ by
\begin{align*}
\mathcal{N}(m)=\left\{j:||\bs_j^1-\bs_m^1||<2\eta\right\}
\end{align*}
Similarly, the \emph{neighborhood data function} is defined as
\begin{align*}
\mathcal{N}_{D}(m)=\left\{j:||\bs_j^1-\bs_m^1||<\eta\right\}
\end{align*}
Let $\bbeta_{j,r}^{Subtree}$ be a vector composed of all elements in $\mathcal{B}_{j,r}^{Subtree}$. Clearly,
$\bbeta=(\bbeta_{1,1}^{Subtree},...,\bbeta_{1,J(1)}^{Subtree})'$.
Exploiting properties of the compactly supported basis functions, one obtains
\begin{align*}
\bbeta_{s,1}^{Subtree}|- \stackrel{\mathcal{L}}{=} \bbeta_{s,1}^{Subtree}|\by_{\mathcal{N}_D(s)},
                       \bbeta_{\mathcal{N}(s),1}^{Subtree}, s=1,...,J(1).
\end{align*}
Here $\mathcal{N}_D(s)=\{i_1,...,i_l\}$ will imply that $\by_{\{i_1,...,i_l\}}$ and $\bX_{\{i_1,...,i_l\}}$ are the data responses and corresponding predictors in the domain $\mathcal{D}_{i_1}\cup\cdots\cup\mathcal{D}_{i_l}$.

\begin{enumerate}[a.]
  \begin{algorithm}[t]
    \caption{Distributed computing of the posterior distribution of $\bbeta,\bgamma,\sigma^2,\delta_{j,r}$} \label{post_comp_al}
    {\scriptsize
\item \textbf{No. of nodes used:} Use $J(1)$ nodes for computation.
\item \textbf{MCMC initialization:} Initialize all parameters.
\item At the $t$th iteration, MCMC iterates are given by $(\bbeta_{m,1}^{Subtree})^{(t)}$, $m=1,...,J(1)$, $\sigma^{2(t)},\bgamma^{(t)}$, $\delta_{j,r}^{(t)}$, $j=2,...,J(r); r=1,...,R$ and $\delta_1^{(t)}$.
\item Maximize posterior likelihood w.r.t. $\eta\in\{1,...,h_{\eta}\}$. Compute $(\by_{\mathcal{N}_D(m)},\bX_{\mathcal{N}_D(m)})$ according to the maximized $\eta$. At the $t$th iteration store $(\by_{\mathcal{N}_D(m)},\bX_{\mathcal{N}_D(m)})$ in the $m$th node.
\item \texttt{For} $m=1:J(1)$ in parallel in $J(1)$ different nodes
\begin{enumerate}[i.]
\item $(t+1)$ iterate of $(\bbeta_{m,1}^{Subtree})^{(t+1)}$ is obtained by drawing from $\bbeta_{1,m}^{Subtree}|(\bbeta_{\mathcal{N}(m),1}^{Subtree})^{(t)}$.
\end{enumerate}
\item \texttt{For} $m=1:J(1)$ in parallel in $J(1)$ different nodes
\begin{enumerate}[i.]
\item Calculate $\bX_m'\bX_m$, $\by_m-\bK_m\bbeta$, where $\bK_m=(K(\bs,\bs_1^1,\phi_1),...,K(\bs,\bs_{J(R)}^R,\phi_R))$, $\bs\in\mathcal{D}_m$.
\end{enumerate}
\item Use the fact that $\sum_{m=1}^{J(1)}\bX_m'\bX_m=\bX'\bX$ and $\by-\bK\bbeta=(\by_1-\bK_1\bbeta,...,\by_{J(1)}-\bK_{J(1)}\bbeta)'$ to update from the full condition of $\bgamma$.
\item Update $\delta_{j,r}^{(t+1)}$ and $\delta_1^{(t+1)}$ at the $(t+1)$th iteration.
 }%
  \end{algorithm}
\end{enumerate}
Algorithm~\ref{post_comp_al} describes details of the computation strategy we adopt. As per Algorithm~\ref{post_comp_al}, the computation involves $J(1)$ nodes with $m$th node storing $\{\by_{\mathcal{N}_D(m)},\bX_{\mathcal{N}_D(m)}\}$ and executing posterior updates of $\bbeta_{m,1}^{Subtree}$. The main computation cost involved in the $m$th node is in computing Cholesky decomposition of a $dim(\bbeta_{\mathcal{N}(m),1}^{Subtree})\times dim(\bbeta_{1,\mathcal{N}(m)}^{Subtree})$ and multiplying a $dim(\mathcal{N}_D(m))\times(\sum_{r=1}^R J(r))$ matrix with a vector of dimension $(\sum_{r=1}^R J(r))$. They incur computation
complexities of $O(dim(\mathcal{N}(m))^3)$ and $O(dim(\mathcal{N}_D(m))\sum_{r=1}^R J(r))$ respectively.
Since $dim(\bbeta_{1,\mathcal{N}(m)}^{Subtree})=((2d)^R-1)/(2d-1)$, the computation time for the former is low. Choosing $J(1)$ large enough one can reduce the computation time for the latter as well. The storage complexity is also dominated by $dim(\mathcal{N}_D(m))$.

\begin{table}[h]
\centering
\begin{tabular}{cc}
	\hline
Time(multiple processor) & Time (single processor) \\
	\hline
&\\
$\left[(2\eta-1)\frac{n}{J(1)}\right]\left[\sum_{r=1}^{R}J(r)\right]+\frac{(2d)^r-1}{2d-1}$ &  $J(1)\left[(2\eta-1)\frac{n}{J(1)}\right]\left[\sum_{r=1}^{R}J(r)\right]$ \\
& \\
	\hline
\end{tabular}
\caption{Dominant terms in calculating time complexities for multiscale DCT with single and multiple processors.}
\label{memtime}
\end{table}


\subsection{Prediction and residual surface interpolation}
 Let $\bs_0$ be any location in the domain, where we seek to predict
$y(\bs_0)$, based on a given vector of predictors $\bx(\bs_0)'$.
The spatial prediction at $\bs_0$ proceeds from the
posterior predictive distribution
\begin{align}\label{Eq: Posterior_Predictive_Y}
 p(\by(\bs_0)\given\by) = \int p(\by(\bs_0)\given \by,\bTheta)p(\bTheta\given\by)\: d\bTheta,
\end{align}
using composition sampling, where $\bTheta=(\sigma^2,\bgamma,(\beta_j^r)_{j,r=1}^{J(r),R},(\delta_{j,r})_{j,r=1}^{J(r),R})$.
For each $\{\bTheta^{(l)}\}$,
$l=1,2,\ldots,L$, obtained from the posterior distribution
$p(\bTheta\given\by)$, draw $\by(\bs_0)^{(l)}$ from
$p(\by(\bs_0)\given \bTheta^{(l)})$. The resulting
$\by(\bs_0)^{(l)},\, l=1,2,\ldots,L$ are samples from (\ref{Eq:
Posterior_Predictive_Y}). This is especially simple for multiscale DCT as $p(\by(\bs_0)\given \bTheta)$ turns out
to be a normal distribution.

For multiscale DCT, full Bayesian inference on the residual spatial surface at any unobserved location $\bs_0$ is trivially obtained.
For each posterior sample $\{\bTheta^{(l)}\}$, $l=1,2,\ldots,L$, compute
$w(\bs_0)_{l}=\sum_{r=1}^{R}\sum_{j=1}^{J(r)}K(\bs_0-\bs_j^r,\phi_r)(\beta_j^r)^{(l)}$.
$w(\bs_0)_l$ are samples from the posterior distribution of the residual process. Surface interpolation
is straightforward hereafter.

\section{Theoretical properties}\label{theory}
We establish convergence results for multiscale DCT regression model \eqref{multiscale} under the simplifying assumptions\footnote{Simplifying assumption is merely to ease notation and calculations; all results generalize in a straightforward manner.} that the predictor coefficient $\bgamma=(0, \dots, 0)$.

Define two metrics in the function space given by
\begin{align*}
||w||_{\infty} &=\sup\limits_{\bs\in\mathcal{D}}|w(\bs)|,\\
||w||_{\zeta} &=\max\limits_{k\leq\floor{\zeta}}\sup\limits_{\bs\in\mathcal{D}}|D^k w(\bs)|+\max\limits_{\tilde{k}\leq\floor{\zeta}}\sup\limits_{\bs,\bs'\in\mathcal{D}}\frac{|D^k w(\bs)-D^k w(\bs')}{||\bs-\bs'||^{\zeta-\floor{\zeta}}},
\end{align*}
where $D^k=\frac{\delta^{k_1+k_2}}{\delta s_1^{k_1}\delta s_2^{k_2}}$, for $k_1,k_2\in\mathbb{N}$ and $\bs=(s_1,s_2)'$.
Further define the sets
\begin{align*}
\Theta_{\zeta} &=\left\{w(\bs):w(\bs)=\sum_{r=1}^{R}\sum_{j=1}^{J(r)}K(\bs,\bs_j^r,\phi_r)\beta_j^r, R\in\mathbb{N},\bs_j^r\in\mathcal{R}^2,
                          \beta_j^r\in\mathcal{R},||w||_{\zeta}<\infty, \right\}\\
\Theta_{\zeta}^{n} &=\left\{w\in\Theta_{\zeta}: ||w||_{\zeta}<n^{\alpha},\alpha\in (1/2,1]\right\}\\
\Theta_{\zeta,c} &= \mbox{Closure under}\: ||\cdot||_{\infty}\:\mbox{of}\:\Theta_{\zeta}\\
\mathcal{B}_{\epsilon,n} &= \left\{w\in\Theta_{\zeta}^{n}:\frac{1}{n}\sum_{i=1}^{n}|w(\bs_i)-w_0(\bs_i)|<\epsilon,\left|\frac{\sigma}{\sigma_0}-1\right|\right\}.
\end{align*}

\begin{theorem}\label{consistency}
Let $\mathcal{P}_{w_0,\sigma_0^2}$ denotes the true data generating joint distribution of $\{y_i\}$. Assume
\begin{enumerate}[(a)]
\item $\mathcal{D}$ is compact.
\item $K(\cdot,\cdot,\phi_r)$ is continuous.
\item $w_0\in\Theta_{\zeta,c}$, $||w_0||_{\zeta}<\infty$, for some $\zeta$.
\end{enumerate}
Then for any
$(w_0,\sigma_0^2)\in\Theta_{\zeta,c}\times\mathcal{R}^{+}$ and for any $\epsilon>0$,
\begin{align*}
\lim\limits_{n\rightarrow\infty} \Pi((w,\sigma^2)\in\mathcal{B}_{\epsilon,n}|y_1,...,y_n)=0
\end{align*}
almost surely under $\mathcal{P}_{w_0,\sigma_0^2}$.
\end{theorem}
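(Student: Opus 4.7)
The plan is to deduce the conclusion from a Schwartz-type posterior-consistency argument in the spirit of Ghosal--Ghosh--Ramamoorthi and Choi--Schervish, adapted to fixed-design Gaussian regression with an infinite-dimensional prior. Since $\mathcal{B}_{\epsilon,n}$ is couched in the empirical pseudo-metric $\rho_n(w,w_0)=n^{-1}\sum_i|w(\bs_i)-w_0(\bs_i)|$ together with a variance-ratio bound, and elements are forced to lie in the sieve $\Theta_{\zeta}^n$ of radius $n^{\alpha}$ with $\alpha\in(1/2,1]$, it suffices to verify the three standard hypotheses: a Kullback--Leibler prior-positivity condition at $(w_0,\sigma_0^2)$, an exponentially small prior mass on the complement of $\Theta_\zeta^n$, and an exponentially powerful sequence of tests for the relevant empirical neighborhood (or its complement, depending on the intended sign of the conclusion) inside the sieve.

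For the prior-positivity step I would first use the compactness of $\mathcal{D}$ together with $w_0\in\Theta_{\zeta,c}$ and Theorem~\ref{RKHS}, which identifies $\mathcal{H}_r$ with the Sobolev space $\mathcal{S}^{d/2+3/2}(\mathcal{R}^d)$, to produce, for every $\delta>0$, a finite-resolution approximand $\tilde{w}=\sum_{r=1}^{R_\delta}\sum_{j}\tilde{\beta}_j^r K(\cdot,\bs_j^r,\phi_r)$ with $\|w_0-\tilde{w}\|_\infty<\delta/2$. The multiscale tree shrinkage prior charges every Euclidean neighborhood of $(\tilde{\beta}_j^r)$: conditional on the scales, $\beta_j^r\sim N(0,\alpha_j^r)$ has full support on $\mathbb{R}$, and the Gamma hyperpriors on $\delta_1,\delta_{j,r}$ have density bounded away from zero on any compact piece of $\mathbb{R}^+$. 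Coupled with $\sigma^2\sim IG(c,d)$, which has positive density throughout $\mathbb{R}^+$, this yields $\Pi\bigl(\mathrm{KL}((w_0,\sigma_0^2);(w,\sigma^2))<\delta\bigr)>0$ for every $\delta>0$, which in Gaussian regression reduces to control of $\|w-w_0\|_\infty$ and $|\sigma^2/\sigma_0^2-1|$.

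For the sieve step I would use $\Theta_\zeta^n$ directly: because its members have $\|\cdot\|_\zeta$-norm below $n^\alpha$, a Kolmogorov--Tikhomirov metric-entropy bound gives $\log N(\varepsilon,\Theta_\zeta^n,\|\cdot\|_\infty)\lesssim (n^\alpha/\varepsilon)^{d/\zeta}$, which stays $o(n)$ provided $\zeta$ is taken large enough, and the Wendland kernel supplies the required smoothness via Theorem~\ref{RKHS}. Together with Hoeffding-type tests inside each entropy cell and a union bound, this yields tests with exponentially small type-I and type-II error probabilities, uniformly over the sieve. The remaining piece is the prior probability $\Pi((\Theta_\zeta^n)^c)$: I would compute moments of $\|w\|_\zeta$ using the explicit product structure of the tree shrinkage prior. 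Because $\mathrm{Var}(\beta_j^r)=(c-1)^{-(r-1)}$ decays geometrically and the Wendland kernel and its derivatives are uniformly bounded on $\mathcal{D}$, Markov's inequality at level $n^\alpha$ with $\alpha>1/2$ produces the required $e^{-C n}$ tail.

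The main obstacle is the prior-mass calculation on the sieve complement, because the tree shrinkage prior couples the $\beta_j^r$ along each branch through a multiplicative chain $\alpha_j^r=\alpha_{\mathrm{father}(j)}^{r-1}\delta_{j,r}^{-1}$, so joint tails depend on products of independent $\mathrm{Gamma}(c,1)$ factors with $c>2$. Controlling $E[\,\|w\|_\zeta^{2}\,]$ therefore requires either exploiting the closed-form $E[(\beta_j^r)^2]=(c-1)^{-(r-1)}$ together with independence across branches, or, if a finer higher-order Hölder-norm bound on derivatives is needed, conditioning on $\delta_1$ and integrating out the $\delta_{j,r}$ branch by branch. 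Once this prior-tail estimate is in place, combining it with the KL positivity and the entropy-based test via Schwartz's theorem in the Ghosal--van der Vaart non-i.i.d.\ formulation closes the argument almost surely under $\mathcal{P}_{w_0,\sigma_0^2}$.
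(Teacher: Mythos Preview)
Your overall strategy---Schwartz-type consistency via KL prior positivity, a sieve, and exponentially powerful tests---is the same framework the paper uses, and your prior-positivity step is essentially the paper's Lemma (though you invoke Theorem~\ref{RKHS} where the paper simply uses $w_0\in\Theta_{\zeta,c}$ directly to produce a finite-resolution approximand). The main structural difference is that the paper does not carry out the sieve-entropy or sieve-complement calculations explicitly: after verifying the KL and variance conditions $K_i(w,w_0)<\epsilon$ and $\sum_i V_i(w,w_0)/i^2<\infty$ from the closed-form Gaussian expressions, it simply defers the construction of exponentially consistent tests to Theorem~2 of Choi and Schervish (2007). Your route is more self-contained but also more exposed.

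There is a genuine gap in your sieve-complement step. You claim that Markov's inequality applied to $E\bigl[\|w\|_\zeta^2\bigr]$ at level $n^\alpha$ with $\alpha>1/2$ yields $\Pi\bigl((\Theta_\zeta^n)^c\bigr)\leq e^{-Cn}$. It does not: second-moment Markov gives at best $\Pi(\|w\|_\zeta>n^\alpha)\leq C n^{-2\alpha}$, which is polynomial, not exponential. To get the exponential tail required by the Ghosal--van der Vaart machinery you would need to control exponential moments of $\|w\|_\zeta$, and here the tree shrinkage prior is awkward: marginally $\beta_j^r$ is a Gaussian scale mixture with inverse-Gamma-product scales, so the marginal tails are heavier than Gaussian and $E[e^{\lambda\|w\|_\zeta}]$ may fail to be finite. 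One repair is to work conditionally on the scale variables $(\delta_1,\delta_{j,r})$, use Gaussian concentration (Borell--TIS) to get an exponential tail for $\|w\|_\zeta$ with a rate depending on $\sum_{r,j}\alpha_j^r$, and then show that the event where this sum is large itself has exponentially small prior mass under the Gamma hyperpriors with $c>2$. Alternatively, and closer to the paper, you can bypass the explicit sieve-complement bound entirely by appealing directly to the test construction in Choi--Schervish, which is what the paper does.
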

Theorem~\ref{consistency} establishes consistency of estimating the data generating surface $w_0$ and the true error variance $\sigma_0^2$. The proof proceeds along the same line of arguments outlined in \cite{choi2007posterior}, \cite{pillai2008levy} and is provided in the Appendix.

\section{Simulation studies}\label{simulation}
\noindent In this section, we use synthetic datasets to assess model performance with regard
to interpolating unobserved residual spatial surface and predicting at new locations. To begin with, we present a one dimensional simulation
experiment on a large dataset. The one dimensional simulation experiment helps to build
intuition on how different resolutions capture large and small scale variabilities,
including the advantage of choosing the tree shrinkage prior. Once computational and
inferential aspects of multiscale DCT are discussed in one dimension, we provide a more realistic two dimensional example where computation time and performance of
multiscale DCT will be compared with state-of-the-art and popular spatial models for big data. A non-distributed implementation of the methods are carried out in \texttt{R} version 3.3.1 on a 16-core machine (Intel Xeon 2.90GHz)
with 64GB RAM.
\subsection{One dimensional Example}
For the one dimensional example, we simulated a dataset of size $n=20,000$ from the model with an intercept, a predictor and a spatial function $w_0(s)$ in $[0,10]$ given by
\begin{align}
w_0(s)=\left\{\begin{array}{l}
\mbox{sin}(2\pi s)s,\:\mbox{if}\:0\leq s<2\\
|\mbox{sin}(s-3)|^3,\:\mbox{if}\:2\leq s<4\\
5|s-5|,\:\mbox{if}\:4\leq s<6\\
 \mbox{sin}(2\pi s)s,\:\mbox{if}\:6\leq s<10.
\end{array}
\right.
\end{align}
A plot of the true spatial function $w_0(s)$ is provided in Figure~\ref{plots}. The function is piecewise differentiable which makes the estimation challenging.

We fit multiscale DCT with $J(1)=30$ to this dataset. As a competitor to multiscale DCT we implement\\
\textbf{DCT-GDP:} DCT-GDP uses the same basis functions as multiscale DCT, but replaces \emph{multiscale tree shrinkage prior} by
Generalized Double Pareto (GDP, \cite{armagan2013generalized}) shrinkage prior on the basis coefficients. GDP prior does not allow any multiscale structure, thereby asserting equivalent apriori shrinkage on all the basis coefficients.\\
\textbf{DCT-Normal:} DCT-Normal also uses the same basis functions with the prior on basis coefficients given by the independent normal prior distributions.

The two competitors are mainly aimed at comparing the inferential advantage of the tree shrinkage prior over the ordinary shrinkage prior and normal prior distributions. Additionally, we fit multi-scale DCT with one and two resolutions to assess how the choice of $R=3$ improves inference. Multiscale DCT with one and two resolutions are referred to as MDCT(1) and MDCT(2) respectively.
\begin{figure}[th!]
\begin{center}
\subfigure[Different resolutions]{\includegraphics[width=6cm]{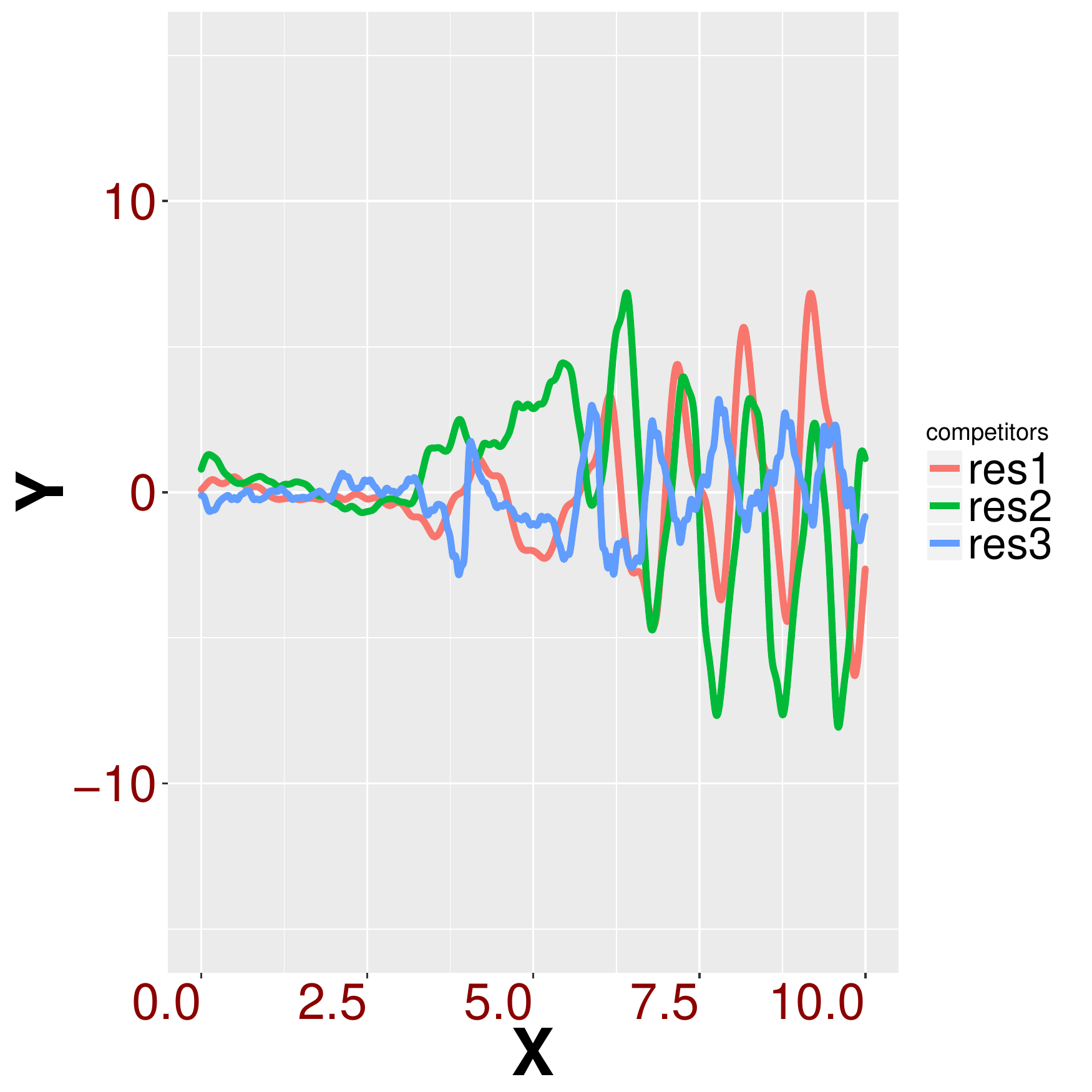}}
\subfigure[True vs. estimated surface]{\includegraphics[width=6cm]{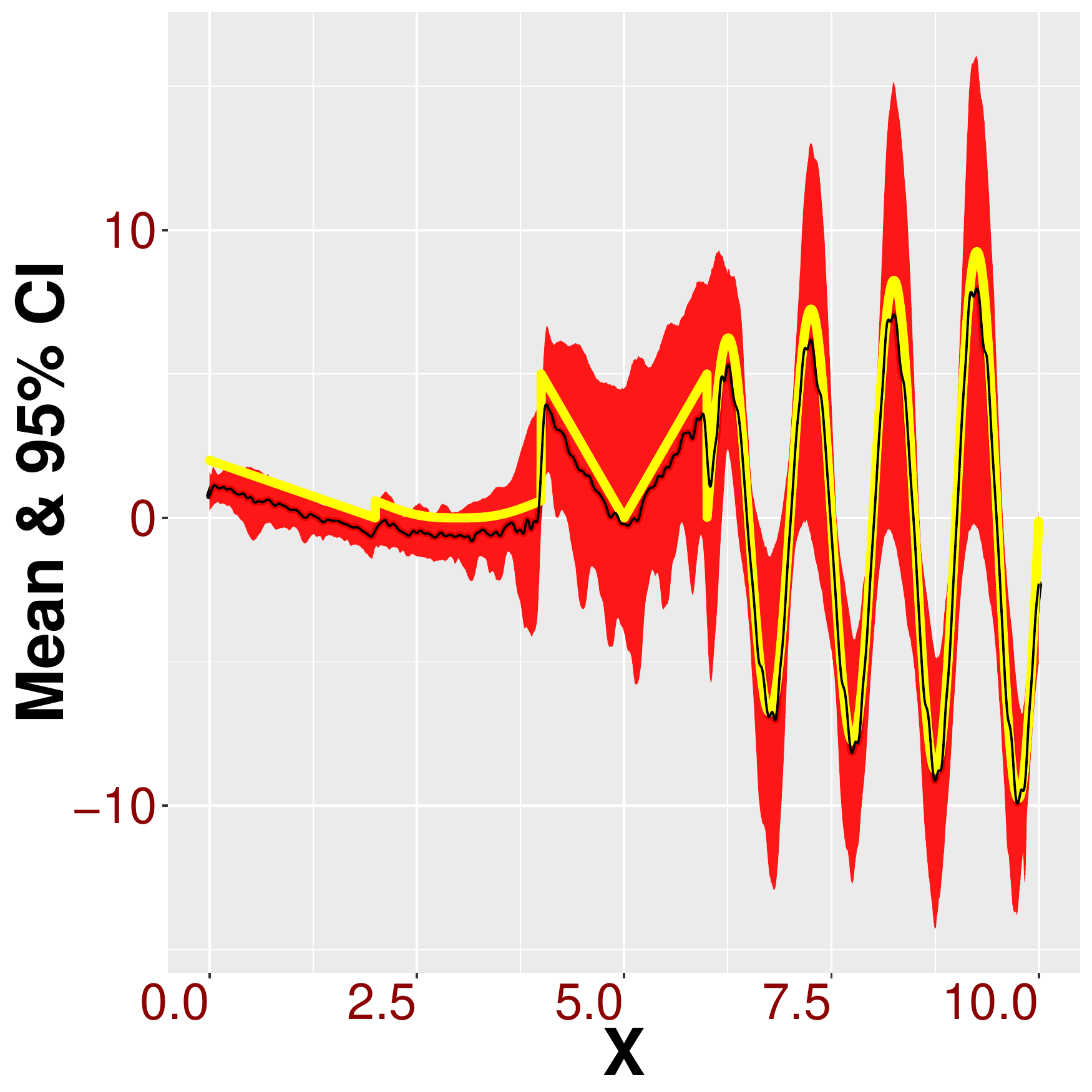}}
\subfigure[MSE]{\includegraphics[width=15cm]{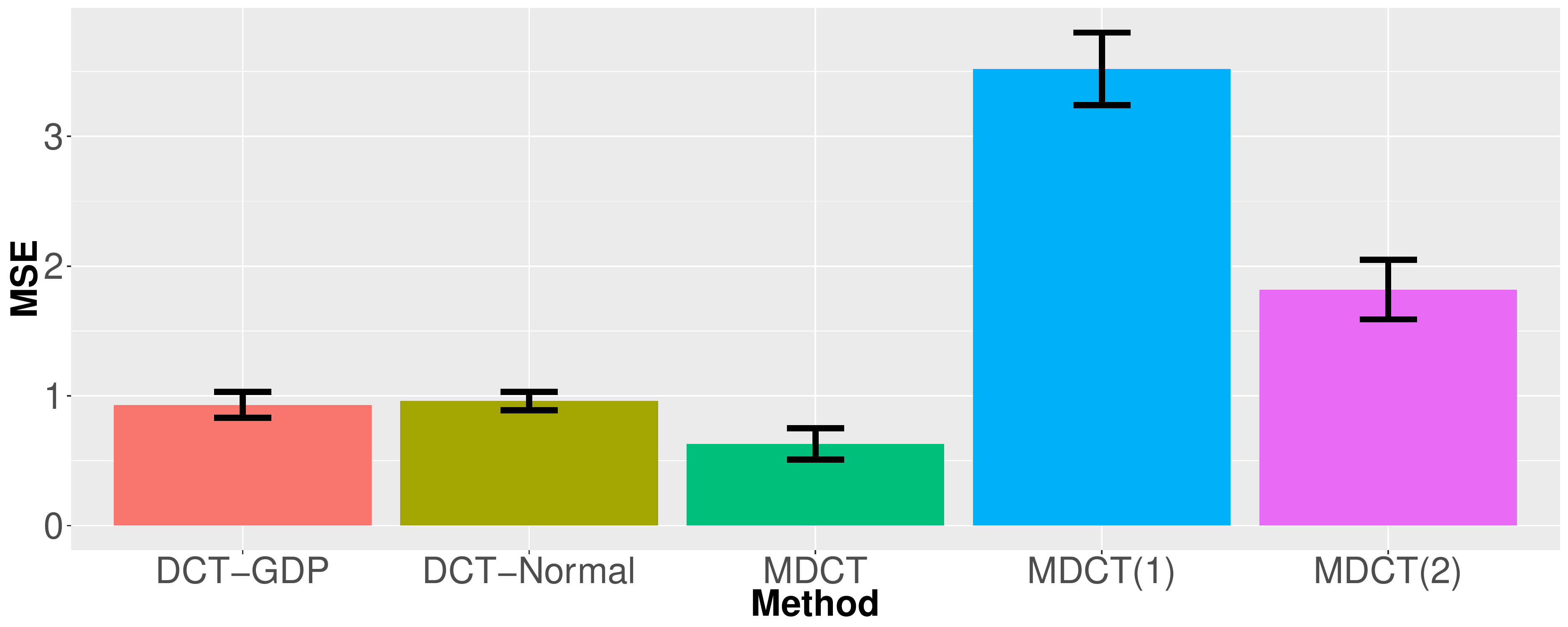}}
\end{center}
\caption{(a) Estimated mean function at different resolutions; (b) shows the true vs. the
estimated function in $R=3$ resolutions. The true function is in yellow and the estimated
function is in black. 95\% confidence bands for the estimated function are displayed in
red. (c) shows the MSE with associated standard errors for all competitors.}\label{plots}
\end{figure}

Figure~\ref{plots} reveals the role played by the three resolutions in estimating $w_0(s)$.
Clearly, resolutions 1 and 2 mostly capture global trends. While resolution 1 mostly captures
positive side of the sinusoidal curve, negative extremities of the sinusoidal curve is mostly
reconstructed by resolution 2. Resolutions 3 captures the
local variability found in the interval [4,10].

The inferential performance of MDCT is evaluated in estimating the spatial surface using
mean squared error. To be more precise, let $\hat{w}(s_i)$ be the posterior
median of $w(s_i)$. Define mean squared error (MSE) by
$MSE=\frac{1}{n}\sum_{i=1}^n(\hat{w}(s_i)-w_0(s_i))^2$. Average MSE along with the
associated standard errors over multiple simulations for the three competing models are
presented in Figure~\ref{plots}. It is evident from these figures that MDCT,
with the same number of knots and same basis functions, provide improved inference, the
reason being implementation of a structured prior distribution on the basis coefficients.
The computation times to implement the three competitors are about the same, with
one MCMC iteration in MDCT taking $\approx 0.33$  seconds to run the full scale inference. Additionally, there
seems to be a substantial improvement in terms of MSE with increasing resolutions, though
it stabilizes after $R = 3$.

The one dimensional exploration of MDCTs presented above, points at a few important advantages.
First of all, multi-scaling is able to capture local features succinctly, yielding
superior inference with similar number of knots and same basis functions over one scale
DCT. Secondly, the computational advantage of multiscale DCT is enormous given that full
Bayesian inference can be performed with a series of local computations. Given the
architecture of MDCT, it is possible to implement MDCT by storing
subsets of data in different processors. Moreover, it does not require storing large
dimensional covariance matrices that incurs onerous storage burden, as is the case for
Gaussian process based spatial models. In the next section a more involved comparative
analysis of MDCT with popular competitors is presented in the context of two dimensional
spatial examples.

\subsection{Two dimensional example}
\noindent In this section, we use two dimensional synthetic datasets to assess
the performance of MDCT in comparison to popular models for large spatial data. For the sake
of our exposition, MDCT is implemented with $3$ resolutions having a total of $2100$ basis
functions. As competitors to MDCT we implement: \\
(1) \textbf{Modified predictive process (MPP)}: a popular low rank
model fitted to the entire data with full Bayesian implementation \citep{finley2009hierarchical}; \\
(2) \textbf{LatticeKrig}: LatticeKrig
\citep{nychka2015multiresolution} is a recently proposed multiresolution model for big
data that employs kernel convolution with radial basis functions and Gaussian Markov
Random Field (GMRF) distribution on the basis coefficients. \texttt{LatticeKrig} package in \texttt{R} is employed for non-Bayesian implementation of LatticeKrig with $3$ resolutions having a
total $12678$ basis functions. LatticeKrig is a closely related competitor to MDCT with the major difference stems from using a GMRF prior on basis coefficients. \\
(3) \textbf{LaGP:} Local approximate Gaussian process \citep{gramacy2015local}.
LaGP has emerged from the computer experiment literature and is devised to
perform fast local neighborhood kriging with Gaussian processes.
LaGP is not designed to provide full scale Bayesian inference and is
only employed to compare predictive inference with other competitors.

We acknowledge \textbf{NNGP} \citep{datta2015hierarchical} and \textbf{multiresolution predictive process} \citep{katzfuss2016multi}
as the two important competitors of MDCT. However, both these methods are complicated in terms and implementation and till date
there is no readily available open source software to implement these methods. To avoid the risk of implementing them incorrectly,
we refrain from showing inferential results from these two methods. Moreover, it is argued in \cite{gramacy2015local} that LaGP performs
better than nearest neighbor methods in many applications.

Bayesian implementation of MPP is performed using the package \texttt{spBayes} in
\texttt{R}. It is well known that MPP is not a suitable model when the sample size
is very large. Therefore to accommodate MPP as a competitor in the analysis, we design
simulation studies for datasets not larger than $\sim 10500$ locations. Note that
as the number of knots increases, performance of MPP improves, though adding much burden
to computation. Therefore, while comparing MPP, we focus both on computation time and
accuracy. Finally, the \texttt{laGP} package facilitates frequentist implementation of LaGP in
\texttt{R}. All the interpolated spatial surfaces are obtained using the \texttt{R}
package \texttt{MBA}.

To illustrate the performance of the competitors, $10,500$ observations within $[0,1]\times[0,1]$ domain are generated from the classical geostatistical model with likelihood $\by\sim N(\bX\bgamma+\bw_0,\sigma^2\bI)$. The model includes an intercept $\gamma_0$ and a predictor $\bx(s)$ drawn i.i.d from from $N(0,1)$ with the corresponding coefficient $\gamma_1$, $\bgamma=(\gamma_0,\gamma_1)$. $\bw_0=(w_0(\bs_1),...,w_0(\bs_n))'$ is an $n$ dimensional vector that follows a multivariate normal distribution with mean $\bzero_n$ and the covariance matrix of the order $n\times n$ with $(i,j)$ entry given by $\upsilon(\bs_i,\bs_j,\theta_1,\theta_2,\nu))$. $\upsilon$ is chosen from the popular Matern class of correlation functions given by
\begin{align}\label{matern}
\upsilon(s,s',\theta_1,\theta_2,\nu)=\frac{\theta_1}{2^{\nu-1}\Gamma(\nu)}(||s-s'||\theta_2)^{\nu}\mathcal{K}_{\nu}(||s-s'||\theta_2);\:\theta_2>0,\:\nu>0,
\end{align}
with $\theta_2,\nu$ controlling spatial decay and process smoothness respectively, $\Gamma$ is the usual Gamma function and $\mathcal{K}_{\nu}$ is a modified Bessel function of the second kind with order $\nu$ \citep{stein2012interpolation}. We fixed $\nu=0.5$ which reduces to the
exponential covariance kernel and generates continuous but non-differentiable sample paths. For simulations, we use $\theta_2=3$ and the ratio of spatial to noise variability is kept at $20$. Among $10,500$ observations, $10000$ are randomly selected for model fitting and the rest are kept as a test dataset to facilitate predictive inference.

\begin{figure}[th!]
\begin{center}
\subfigure[True surface]{\includegraphics[width=5cm]{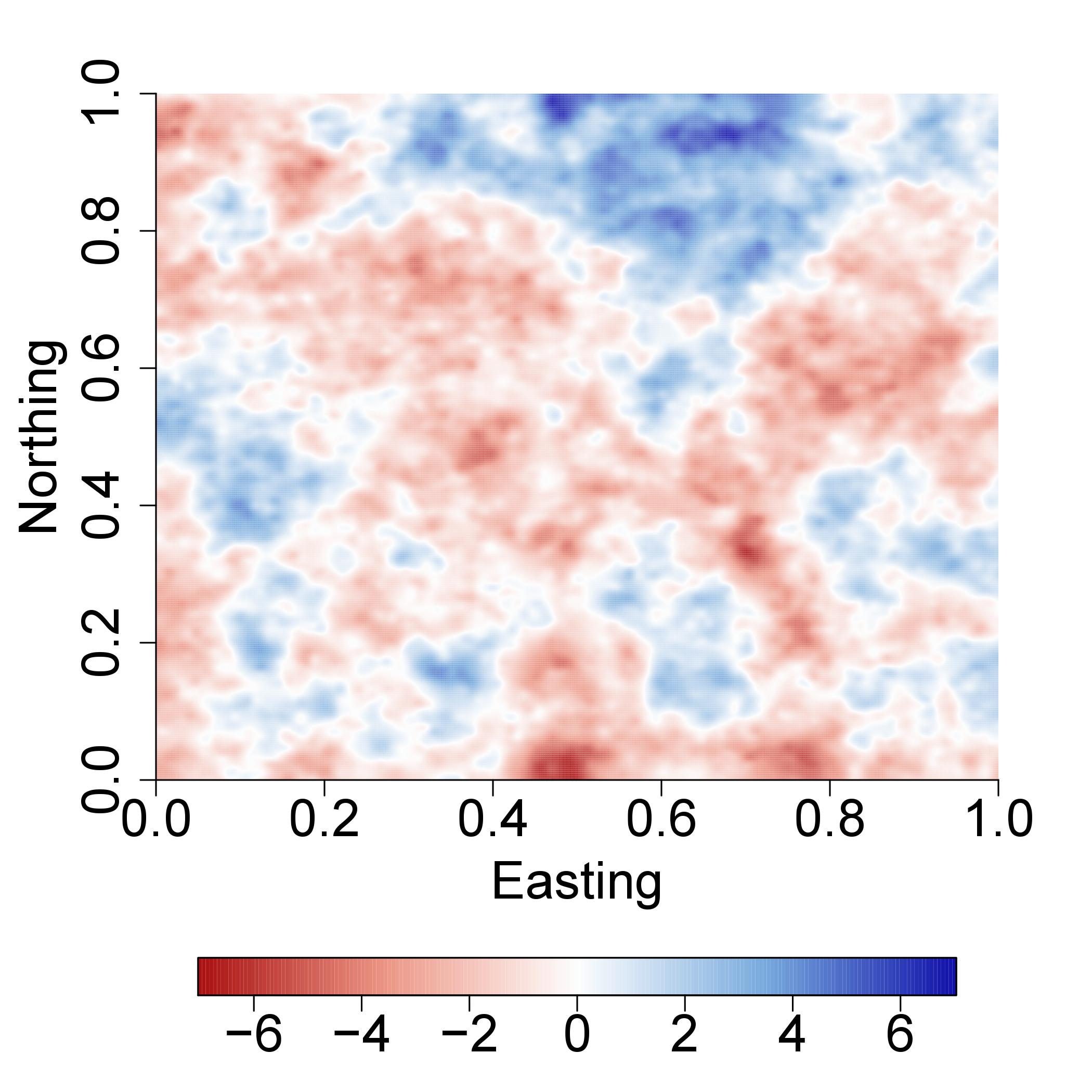}}
\subfigure[LatticeKrig]{\includegraphics[width=5cm]{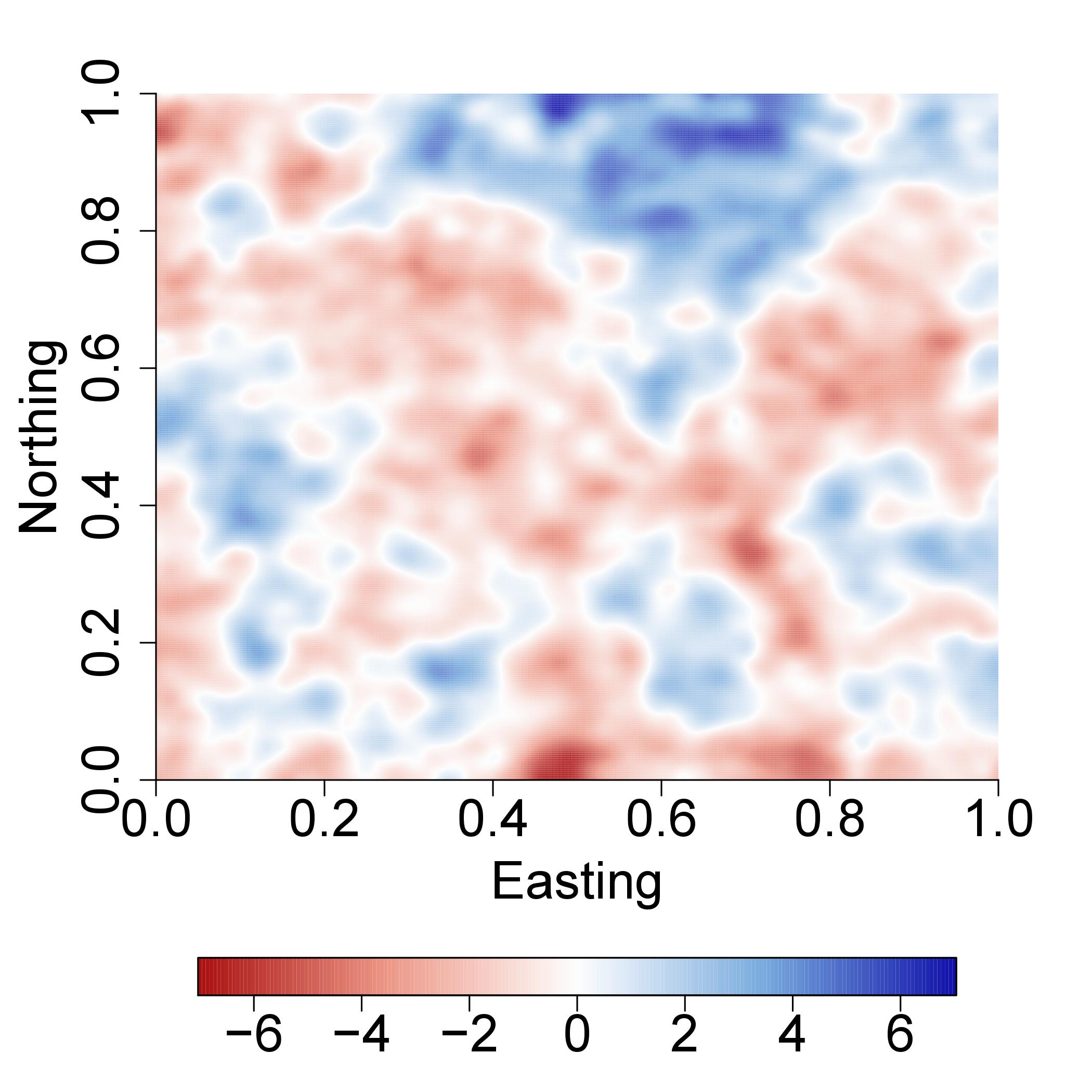}}
\subfigure[MPP]{\includegraphics[width=5cm]{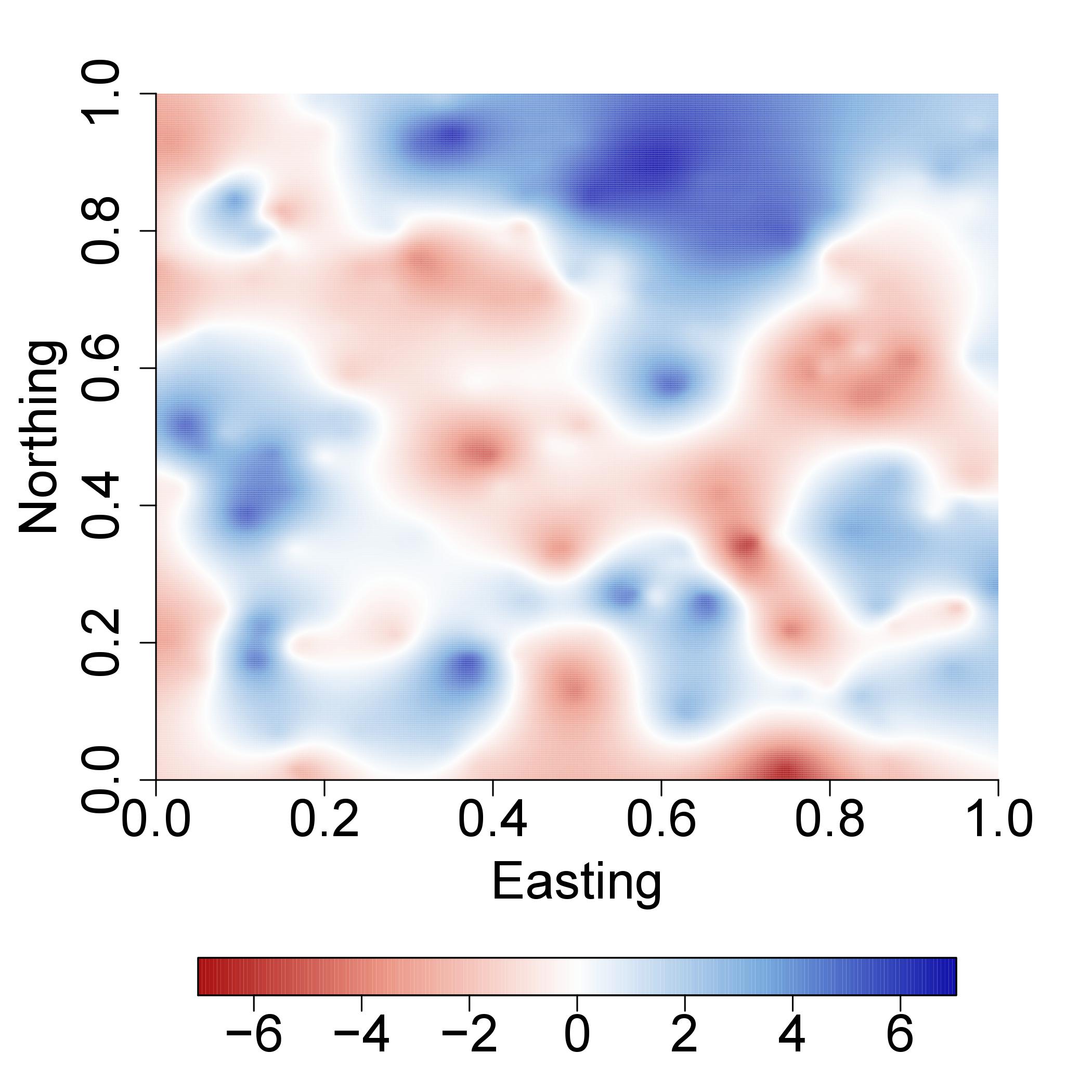}}\\
\subfigure[MDCT mean surface]{\includegraphics[width=5cm]{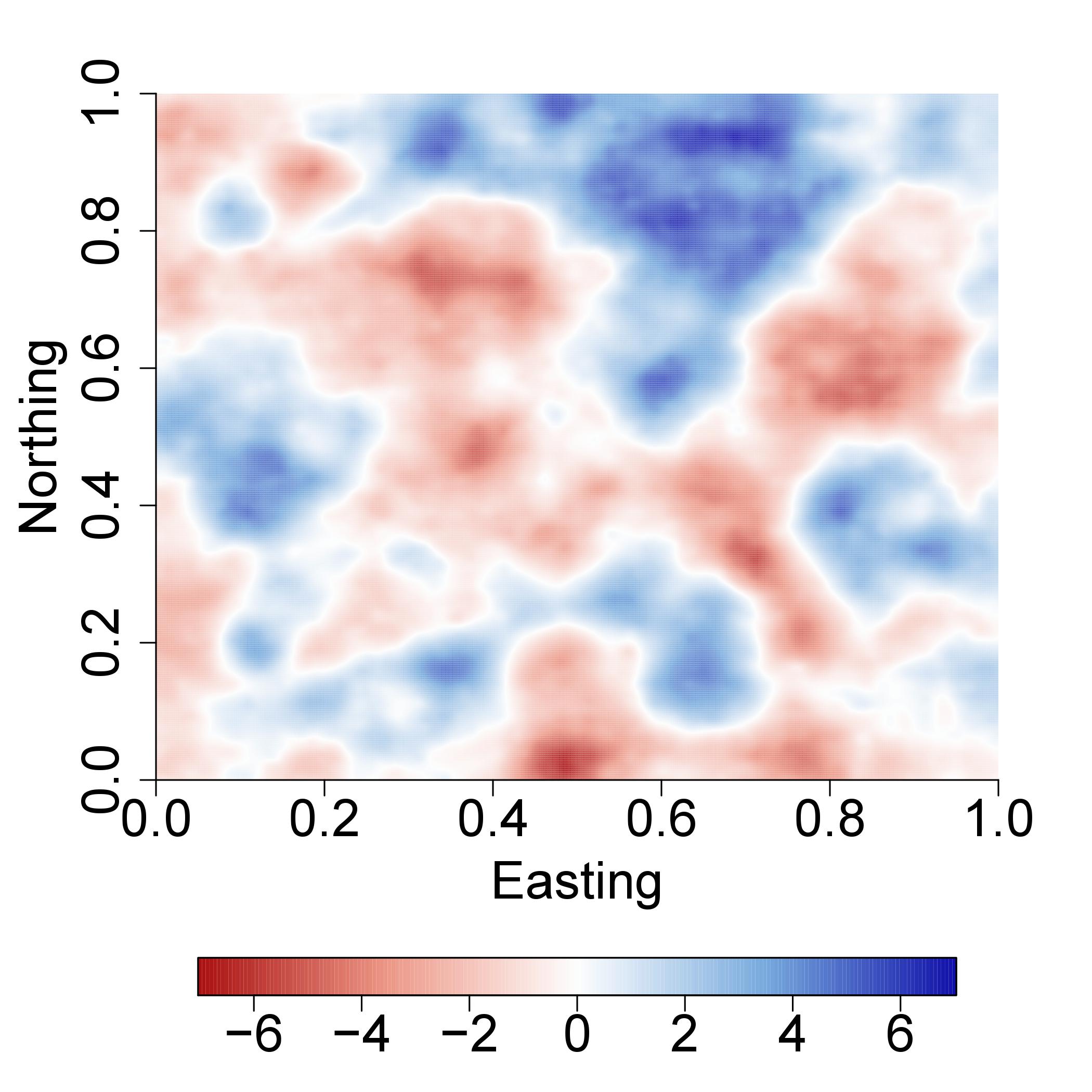}}
\subfigure[MDCT upper 95\% CI]{\includegraphics[width=5cm]{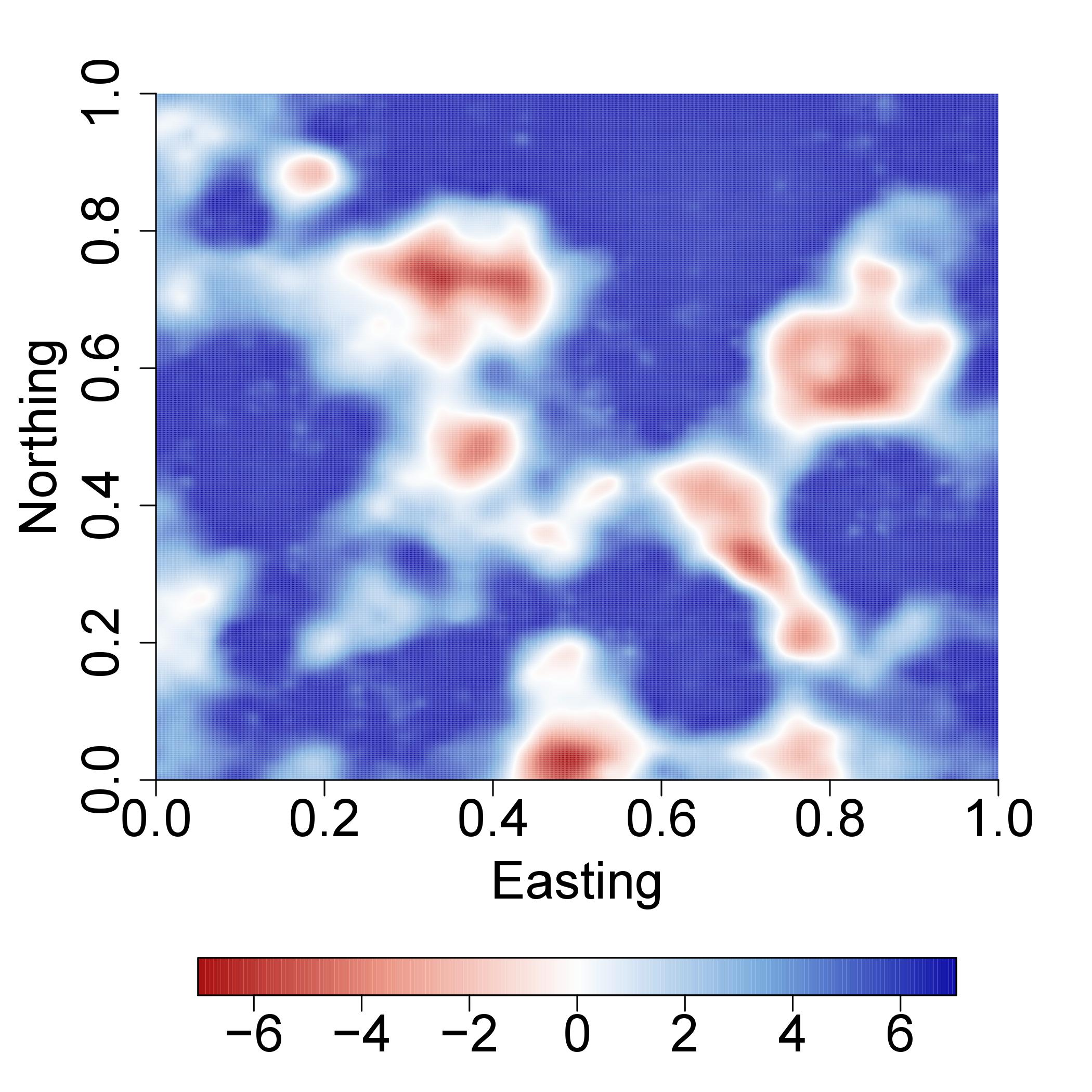}}
\subfigure[MDCT lower 95\% CI]{\includegraphics[width=5cm]{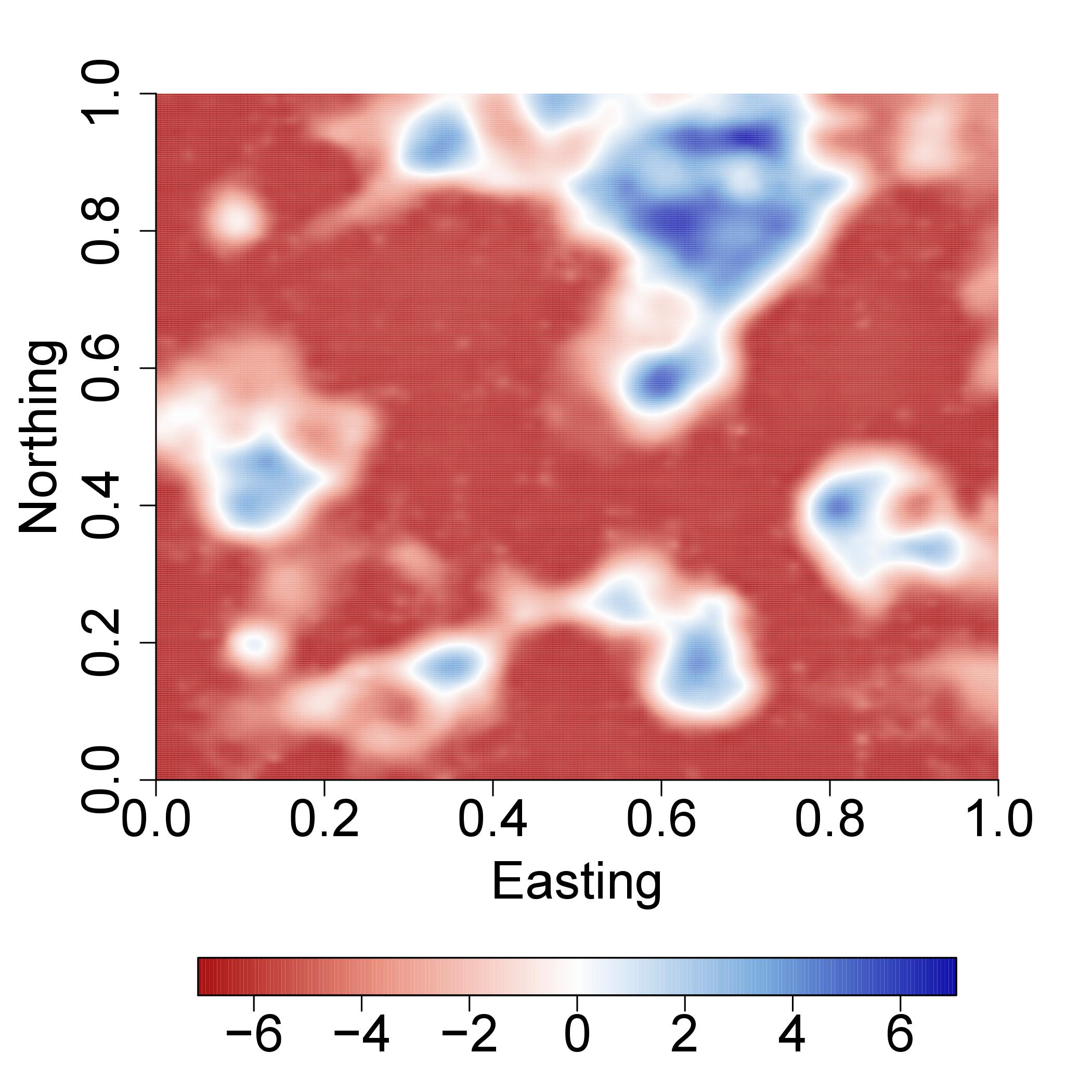}}\\
\subfigure[MSE]{\includegraphics[width=10cm]{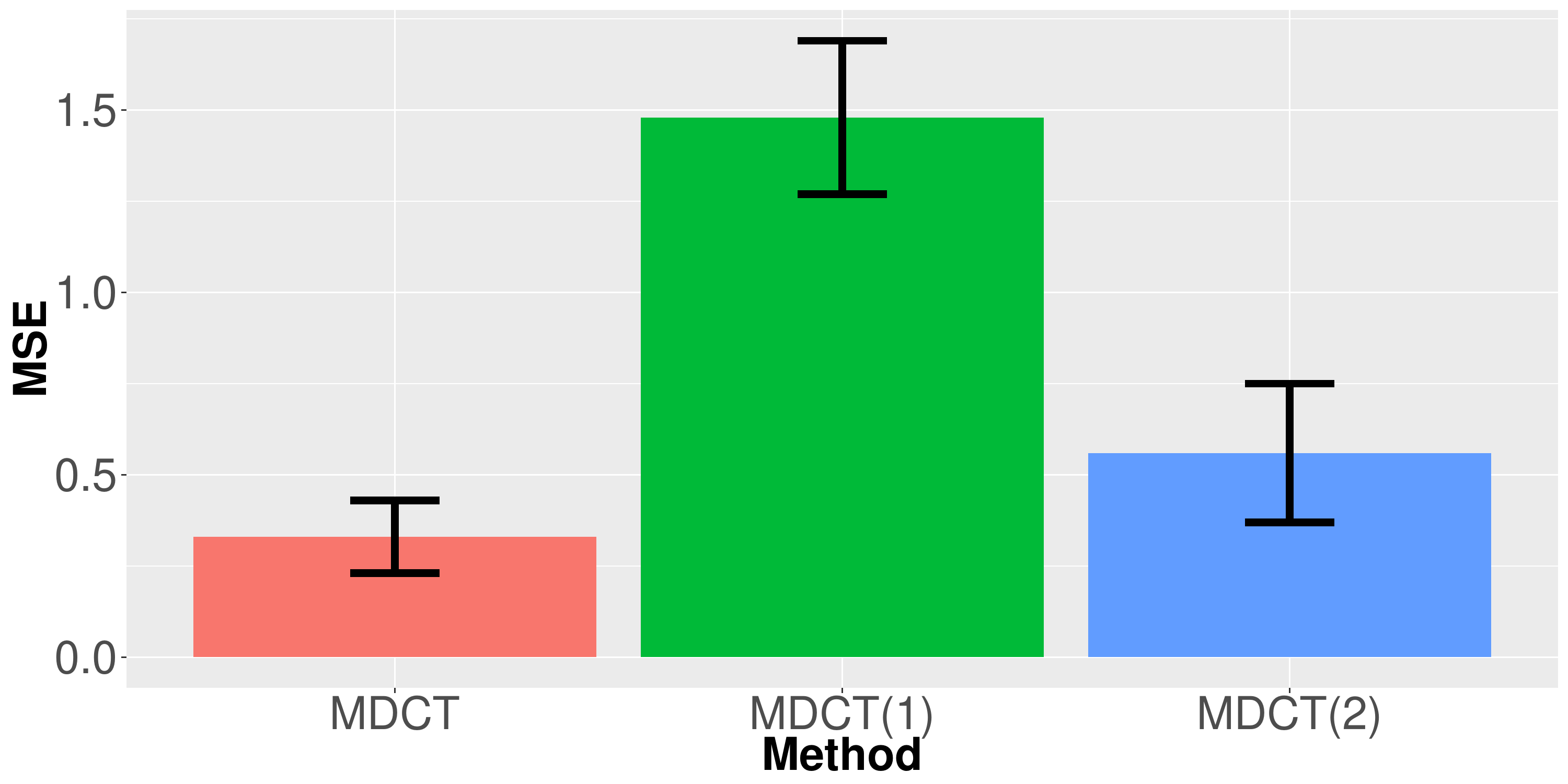}}
\end{center}
\caption{(a) True data generation surface along with posterior mean residual surface from (b) LatticeKrig (c) modified predictive process (d) multiscale DCT and; (e) and (f) present estimated 95\% upper and lower quantile surfaces; (g) MSE along with standard errors for MDCT, MDCT(1) and MDCT(2)}\label{plots2d}
\end{figure}
Figure~\ref{plots2d} presents true data generating surface and the estimated residual surfaces for LatticeKrig (LK), MDCT and MPP. MPP shows excessive smoothing, while LatticeKrig and multiscale DCT yield essentially equivalent estimates of the spatial surface. Further MSE for multiscale DCT (MDCT) with one resolution (MDCT(1)) and two resolutions (MDCT(2)) are also plotted in Figure~\ref{plots2d}.
As expected, MSE for MDCT decreases as the number of resolutions increases.

Next, we turn our attention to the predictive inference of the competitors. To this end, we compare all competitors based on their ability to
produce accurate point prediction and predictive uncertainties. Point prediction of the
competitors are judged based on the mean squared prediction error (MSPE) metric. For
Bayesian competitors, predictive uncertainties are characterized by the length and
coverage of 95\% predictive intervals. The frequentist implementation of LaGP and LK
provides only predictive point estimates and standard errors. Thus, for these two competitors, approximate 95\% predictive intervals are constructed by considering [predictive point estimate $-1.96 *$ standard error, predictive point estimate $+1.96 *$ standard error].

It is clear from Figure~\ref{Fig:MSPE_TIME} that MDCT yields a MSPE that is competitive
with those of LK and LaGP, though the latter two slightly outperforms MDCT.
Interestingly, MDCT with $R=3$ resolution shows significantly improved performance
compared to MDCT with resolution one or two. Intuitively, one can explain such an upsurge
in performance by noting that the true surface generated from the exponential correlation
function has significant local behavior, thereby limiting the ``borrowing of information"
across space. Moreover, coverage and length of 95\% predictive intervals of MDCT demonstrates
accurate characterization of predictive uncertainly as opposed to MPP which shows some
under-coverage. Likewise, the approximate 95\% predictive interval of LaGP exhibits little
under-coverage, while for LK we observe massive under-coverage.

To check the sensitivity with respect to the choice of $R=3$, we run our analysis with
$R=4,5$ and compare to the MSPE obtained from $R=3$. Table~\ref{predictionR} clearly shows that
beyond $R=3$ the improvement in MSPE performance is not commensurate with the increase in
computation cost. We found this conclusion to hold across a number of simulation studies.
Therefore $R=3$ is kept throughout this article.
\begin{figure}[!ht]
  \begin{center}
    \includegraphics[width=8cm]{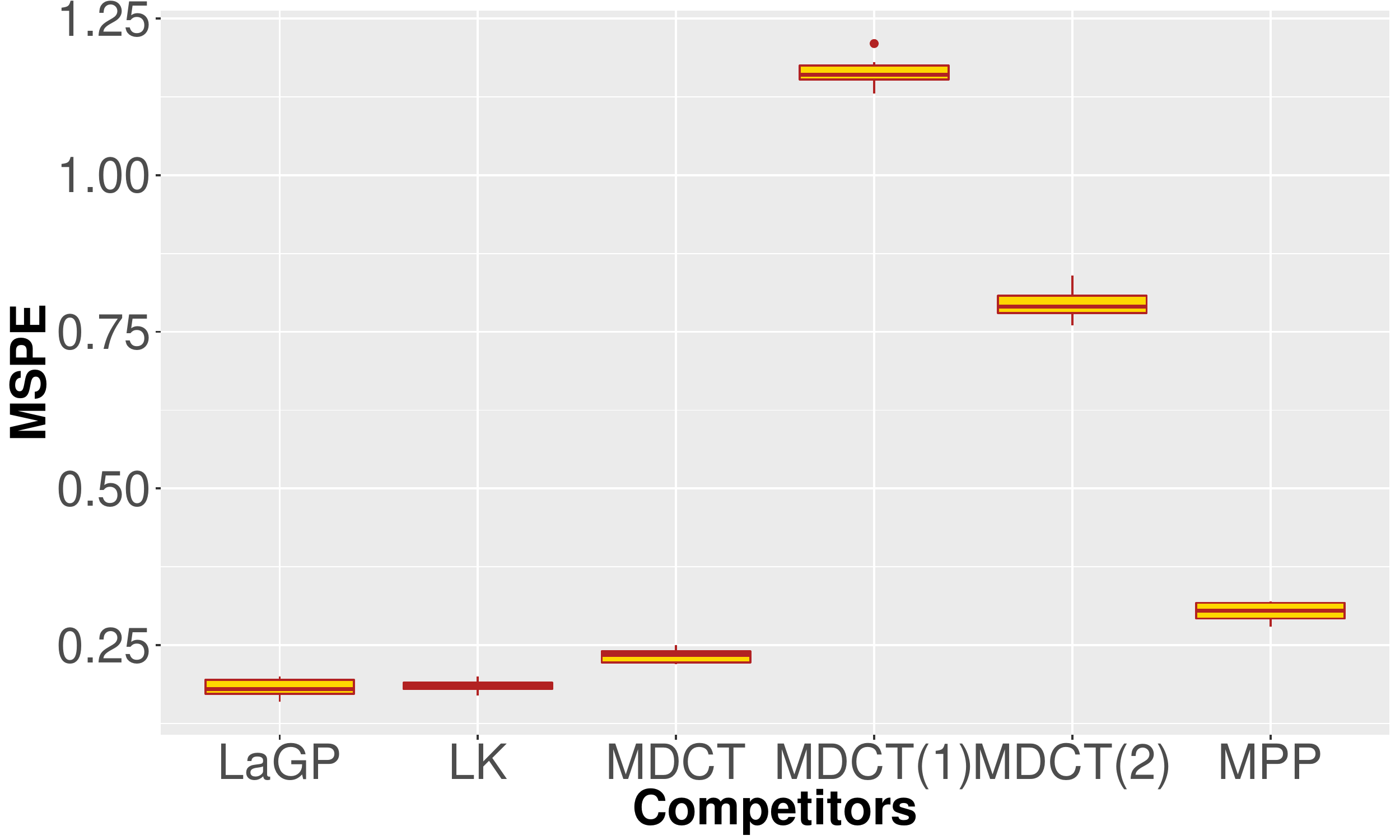}\\
    \includegraphics[width=8cm]{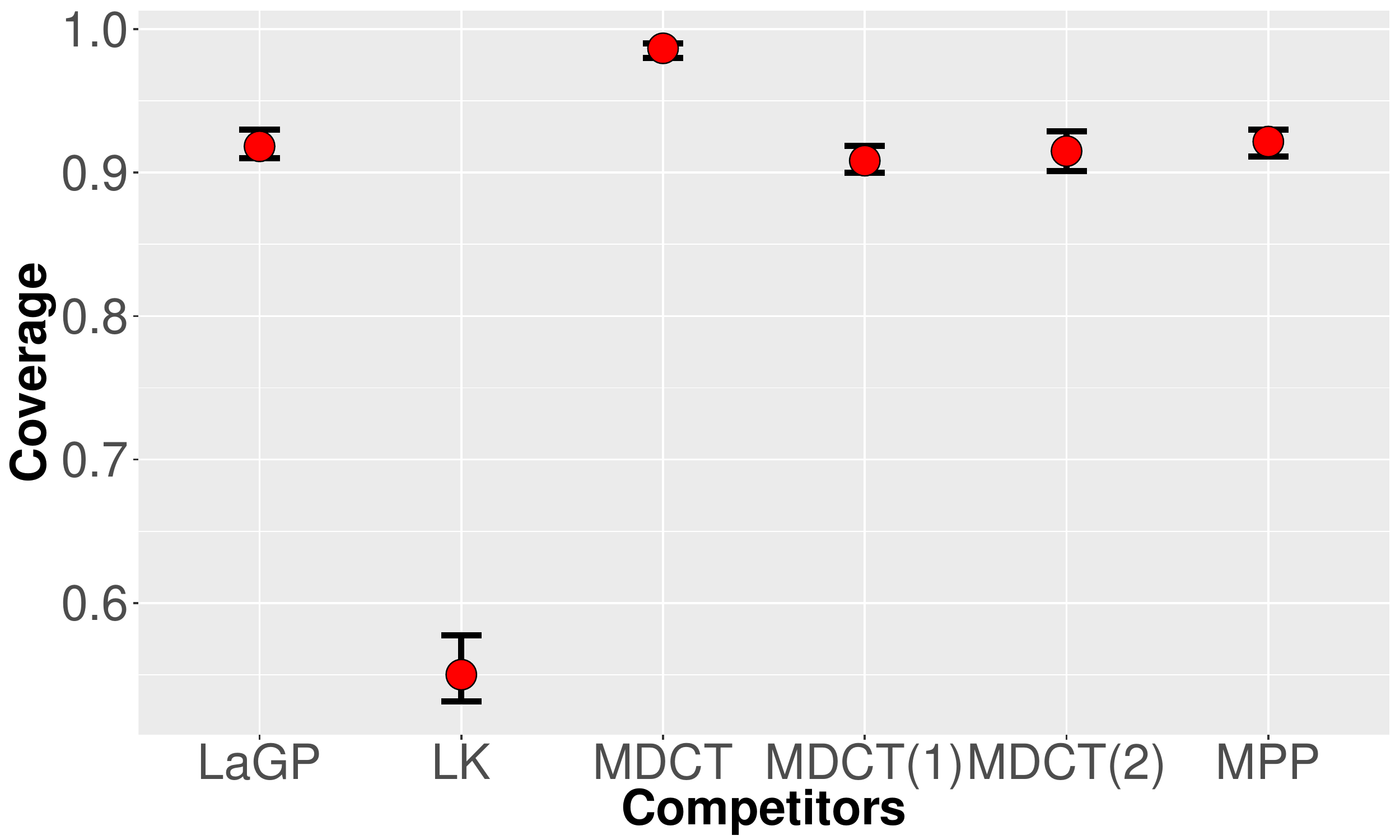}
    \includegraphics[width=8cm]{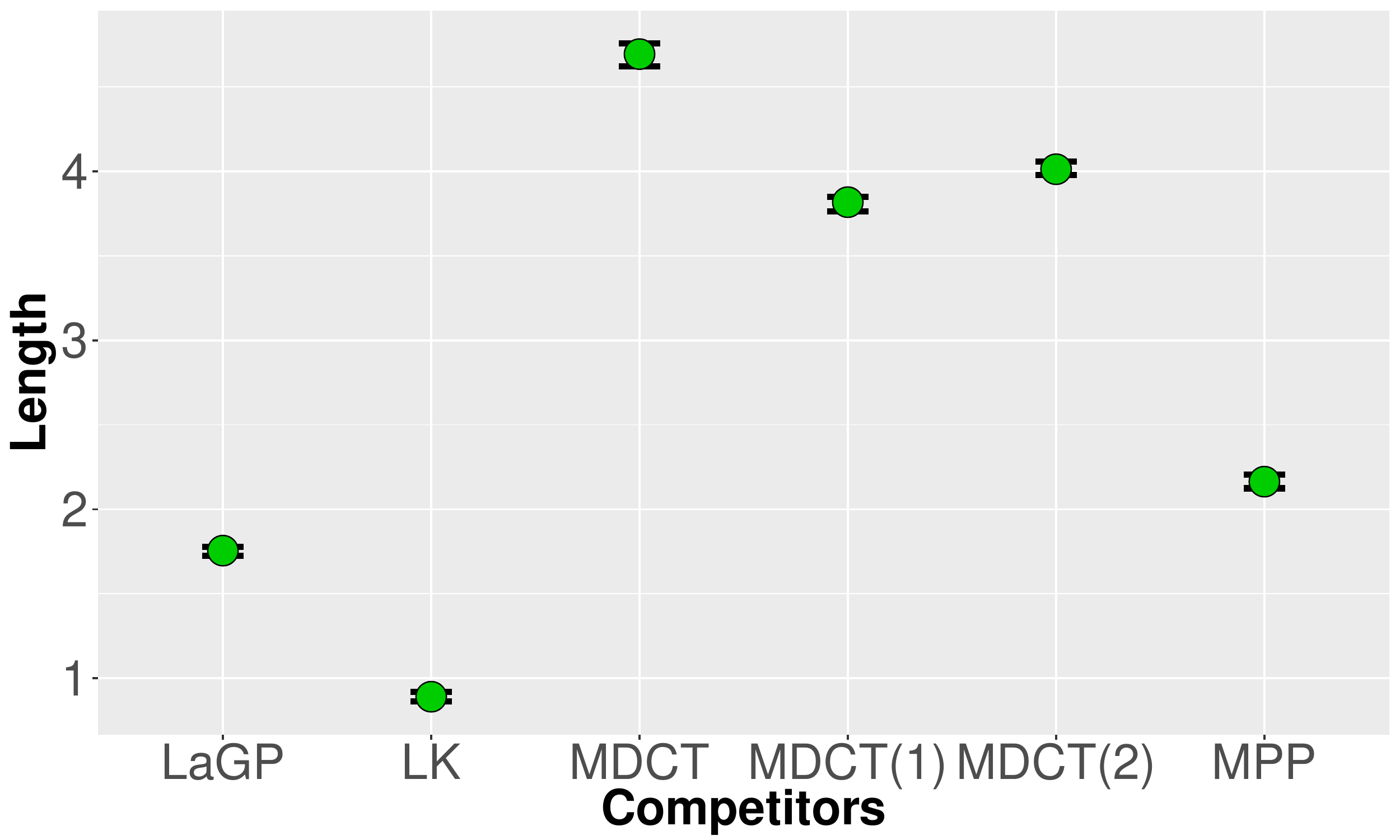}
  \end{center}
\caption{Plot at the top indicates boxplot of mean squared prediction error for all competitors over a few replications. Second and third plots show coverage and length of 95\% predictive intervals for the competitors over the same replications. LatticeKrig shows extreme under-coverage compared to others.}\label{Fig:MSPE_TIME}
\end{figure}

\begin{table}[h]
{\footnotesize
\centering
\begin{tabular}{cccc}
	\hline
 $MDCT$ & $R=3$ & $R=4$ & $R=5$\\
	\hline
MSPE & $0.25_{0.03}$ & $0.21_{0.02}$  & $0.19_{0.02}$ \\
	\hline
\end{tabular}
\caption{Average mean squared prediction error for MDCT with $R=3,4,5$. Associated standard errors over 5 repeated simulations are provided in the subscript.}\label{predictionR}
}
\end{table}

Our analysis finds some interesting points about the competing methods. Recall that MDCT and LatticeKrig are similar in terms of their
multiscale structure, the only difference being the distribution on the basis coefficients. Our investigation reveals that tree shrinkage
priors are appropriately calibrated so as to yield similar point estimates with LatticeKrig with much less number of basis functions. The
GMRF prior distribution on basis coefficients hinders efficient local computation in LatticeKrig. As a result
LatticeKrig in its present form has less scope of being computationally efficient. In contrast, multiscale DCT is able draw full scale Bayesian
inference with a series of parallelizable local computations. At the same time, tree shrinkage prior in multiscale DCT penalizes model overfitting with unnecessary resolutions. Similarly, LaGP has a notable disadvantage for not being model based. As a result, it is not clear
how to extend LaGP for non-Gaussian data, while MDCT structure can readily be embedded into a hierarchical structure to model non-Gaussian spatial data, as is described in the next section.

\noindent \underline{\emph{Computation Time:}}
MDCT in this specific example takes approximately $3.07$ seconds per iteration
with non-optimized, non-parallel \texttt{R} implementation, while MPP implemented in
\texttt{C++} takes close to $7.2$ seconds to run one MCMC iteration. We notice, though, that MPP performs the estimation of the basis functions,
while MDCT assumes a fixed form with the empirical Bayes estimate of $\eta$ at every iteration.
However, even with fixed basis functions, MDCT is able
to demonstrate superior inference to MPP. It is possible that, performing more elaborate inference on
the kernel parameters of the MDCT would improve the predictive performance of the model.
This would come at the cost of increased computational complexity, thus the benefits of
such extension is unwarranted. Recall that, in this specific example MDCT is implemented
with $J(1)=400$. To understand how the computation time of MDCT varies vis a vis MPP with
changing $n$ and $J(1)$, we implement both MPP and MDCT with $J(1)=5^2,10^2$ for
different sample sizes.
 Figure~\ref{comp} reports the computation time for the
competitors using the \texttt{R} function \texttt{Sys.time}. It is be noted that MDCT can be implemented either by
sequentially updating $J(1)$ blocks of parameters or by parallelly updating these $J(1)$ blocks independently in $J(1)$ nodes. Thus the figure displays computation time with both parallel and non-parallel implementation of the MDCT model. The blue
line shows the computation time for MPP from $n=1000$ to $n=100,000$, while the green and
red lines show the same for MDCT with $J(1)=5^2$ and $J(1)=10^2$ respectively. Clearly,
the computation time for MDCT increases linearly with $n$ for both cases. It is also
noticed that the computation time of MDCT with $J(1)=5^2$ is about 4-5 times faster than
$J(1)=10^2$.  The increase in computation time is due to sequential updating of
parameters in $J(1)$ blocks. With a proper parallelized implementation of MDCT, arguably
the increase in computation time from $J(1)=25$ to $J(1)=100$ will be minimal. We found that
practical implementation of MPP becomes prohibitive due to both, memory allocation and
exorbitant computation time, for $n$ above $100,000$. On the contrary, MDCT facilitates
distributed storage of big data into multiple processors. It is worth noticing that
frequentist implementation of LK and LaGP draw inference for a point estimate within a
few minutes. In summary, 2D simulation examples comprehensively establishes MDCT as an
effective tool for fast Bayesian implementation of large scale spatial data.

\begin{figure}[th!]
\begin{center}
\includegraphics[width=14cm,height=6cm]{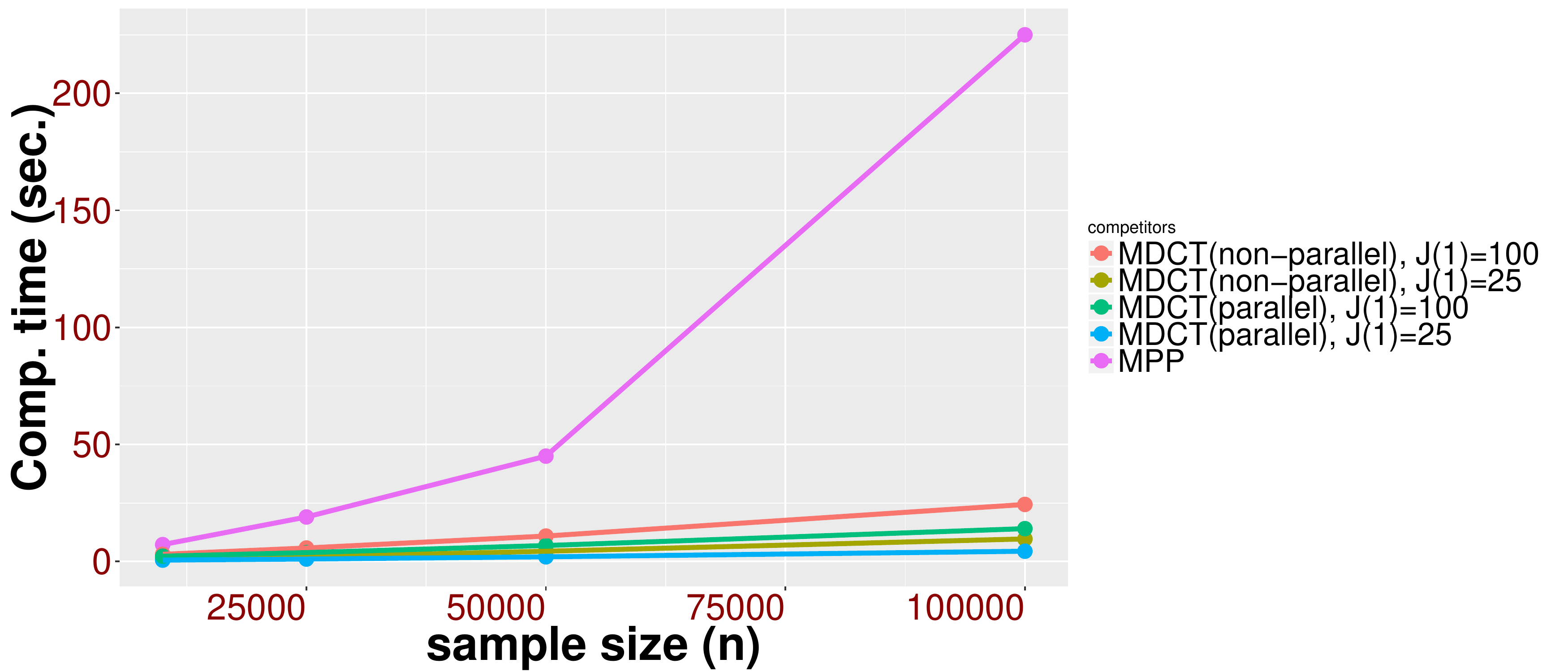}
\end{center}
\caption{Computation time for MPP with $200$ knots, MDCT with $J(1)=25$ and $J(1)=100$. Computation times per MCMC iteration are presented for both MDCT and MPP.}\label{comp}
\end{figure}

\subsubsection{Two dimensional illustration of MDCT with binary spatial data}\label{binary}
To demonstrate the flexibility offered by MDCT as opposed to ad-hoc predictive methods (such as
LaGP), performance of MDCT is investigated under non-Gaussian binary spatial data. For this purpose
$10,500$ observations within $[0,1]\times[0,1]$ domain are generated from the probit spatial regression
model. More precisely, with $\bx(\bs_i)$ as the predictor vector at $\bs_i$, the response $y_i$ is simulated
using
\begin{align*}
y_i &\stackrel{ind}{\sim} Ber(p_i)\\
\Phi^{-1}(p_i) &= \bx(\bs_i)'\bgamma+w_0(\bs_i).
\end{align*}
The model includes an intercept $\gamma_0$ and a predictor $\bx(\bs)$ drawn i.i.d from from $N(0,1)$ with the corresponding coefficient $\gamma_1$, $\bgamma=(\gamma_0,\gamma_1)$. $\bw_0=(w_0(\bs_1),...,w_0(\bs_n))'$ is an $n$ dimensional vector that follows a multivariate normal distribution with mean $\bzero_n$ and the covariance matrix of the order $n\times n$ specified through the Mat\'ern (\ref{matern}) class of correlation functions. A random subset of $10000$ observations are selected for model fitting and the rest is used to judge performance of
MDCT as a binary classifier.

For the sake of our exposition, MDCT is implemented with $3$ resolutions having a total of $2100$ basis
functions. Note that the binary regression precludes the possibility of employing LaGP as a competitor. On the other hand, \texttt{LatticeKrig} package implements LatticeKrig only for continuous response. Thus as a competitor, binary spatial regression with modified predictive process is implemented in \texttt{R} package \texttt{spBayes}.

\begin{figure}[!ht]
\begin{center}
    \subfigure[True surface]{\includegraphics[height=5cm,width=5cm]{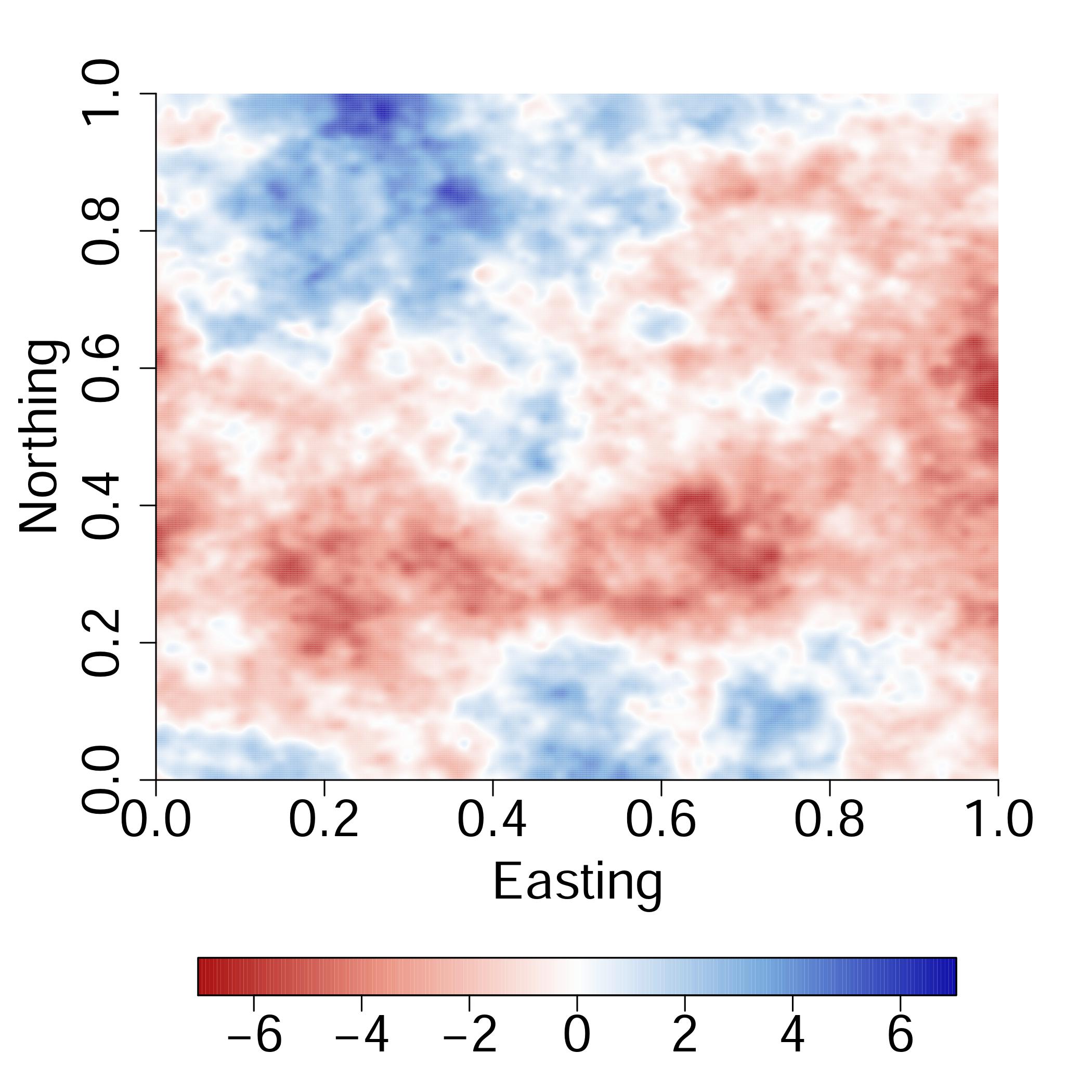}{\label{binary-figs:a1}}}
    \subfigure[Estimated surface: MDCT]{\includegraphics[height=5cm,width=5cm]{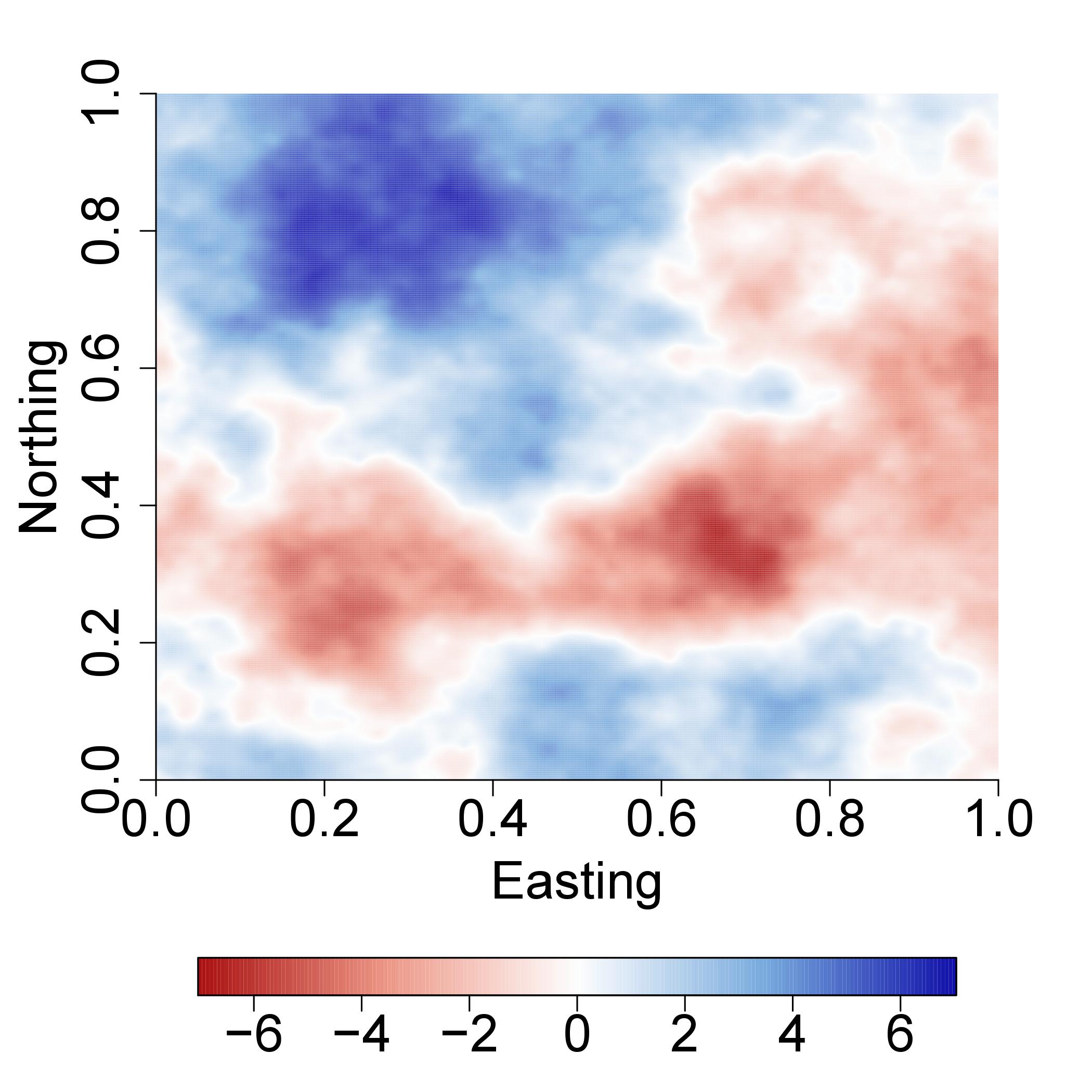}{\label{binary-figs:a2}}}
    \subfigure[Estimated surface: MPP]{\includegraphics[height=5cm,width=5cm]{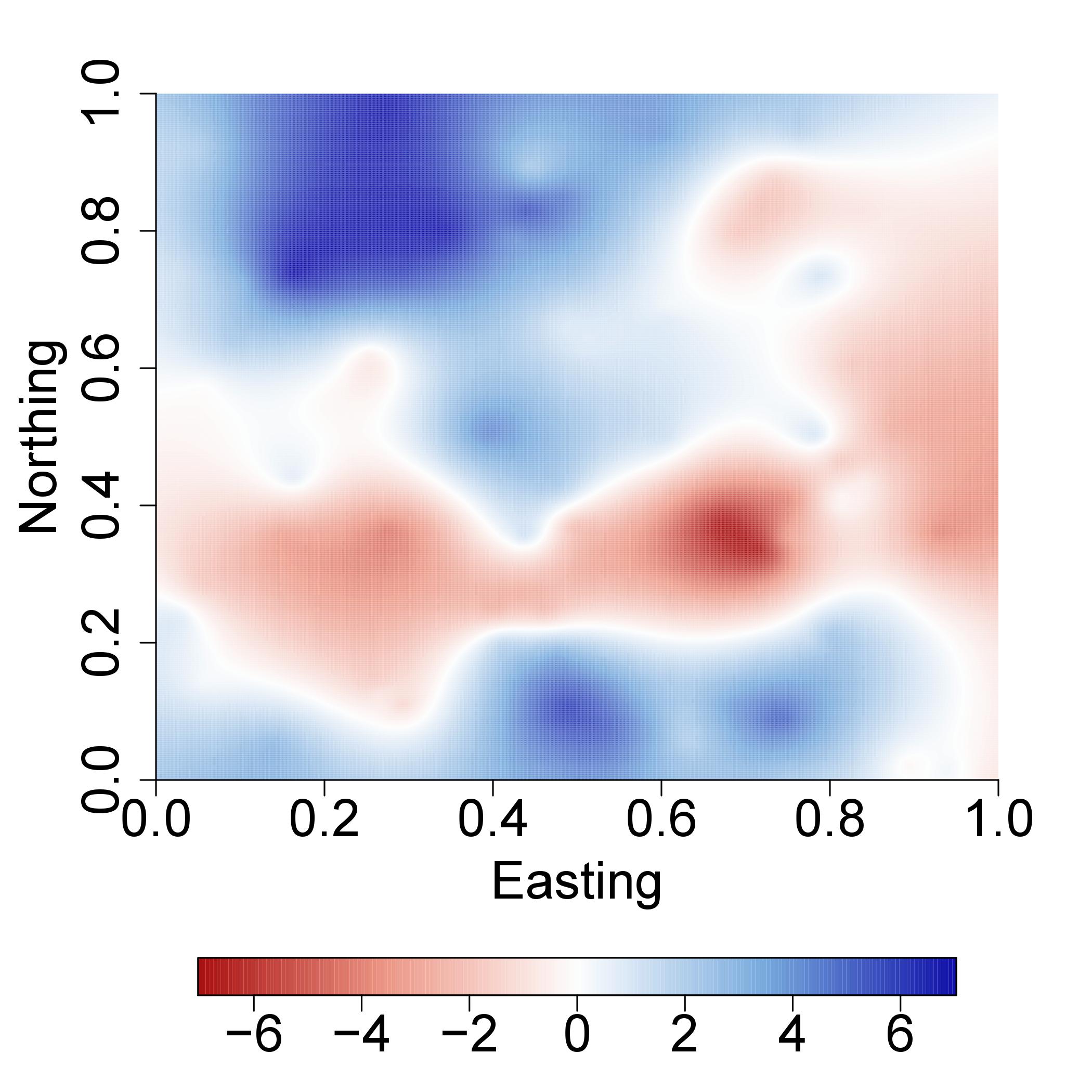}{\label{binary-figs:a4}}}\\
    \subfigure[ROC out of sample]{\includegraphics[height=5cm,width=5cm]{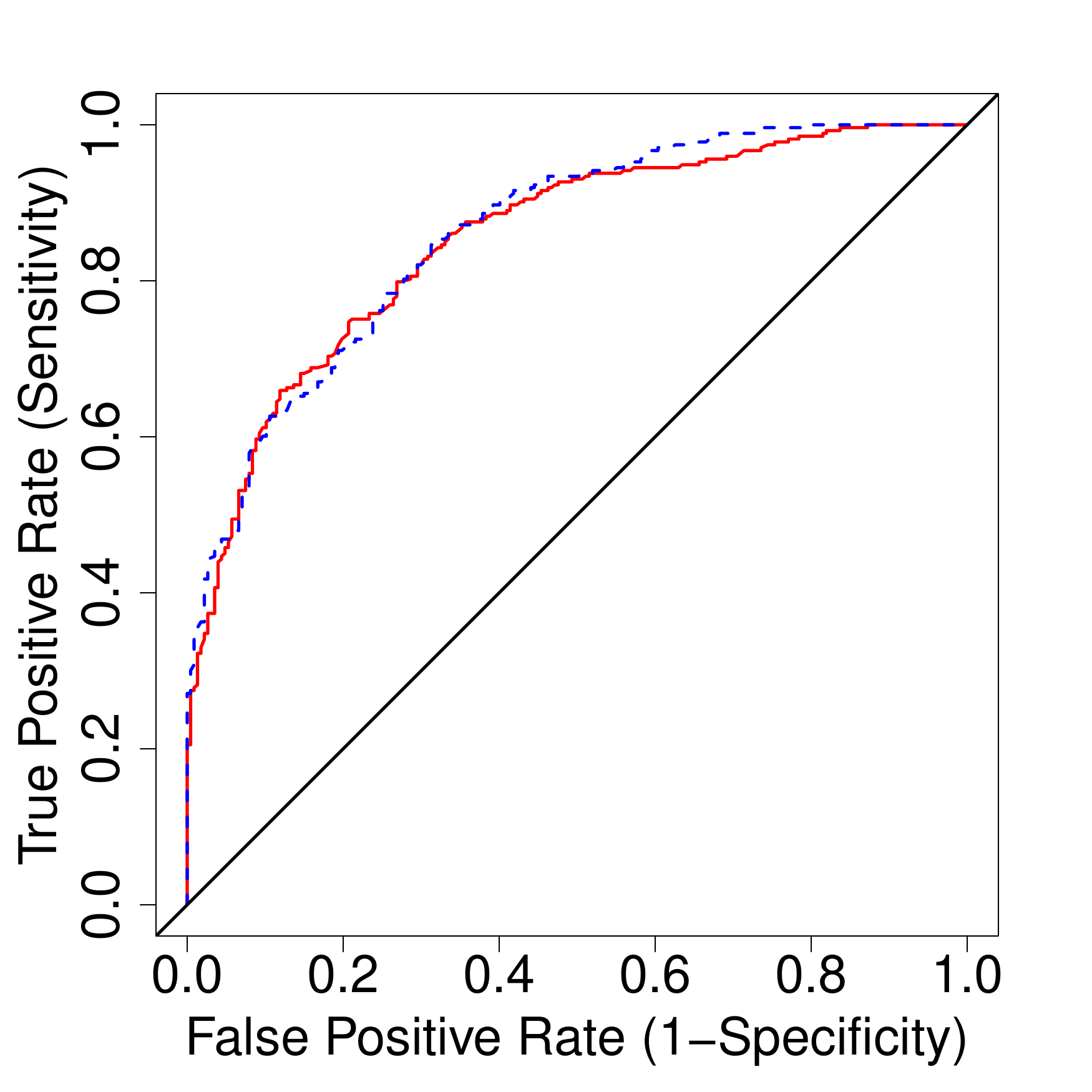}{\label{binary-figs:a3}}}
  \end{center}
\caption{(a) True surface generating the data. Figures (b) and (c) present the posterior predictive mean of estimated spatial surfaces
from MDCT and MPP. (d) shows out of sample ROC curves for MPP and MDCT. Dotted line presents ROC for MDCT, while solied line presents ROC for MPP.}
\label{binary}
\end{figure}

Figure~\ref{binary} shows the true surface and estimated surfaces from MDCT and MPP. Since the surface estimation from binary spatial regression is a notoriously challenging problem, it comes with no surprise that the performance of all
competitors deteriorate when compared with Figure~\ref{plots2d}. However, among the two competitors MDCT outperforms MPP considerably. It becomes clear from Figure~\ref{binary} that MPP undergoes massive oversmoothing and loses most of the local features in the spatial surface. MDCT also experiences smoothing, though to a much lesser degree than MPP. Referring to Figure~\ref{binary-figs:a3}, MDCT appears to be marginally better than MPP in terms of out of sample classification (Area under the ROC curve for MDCT is $0.74$, while the same for MPP is $0.68$). The binary spatial regression analysis further corroborates
the flexibility and accuracy of MDCT. Next section discusses performance of MDCT along with its competitors on a sea surface temperature dataset.

\section{Analysis of the sea surface temperature data}\label{SST}

A description of the evolution and dynamics of the oceans' temperature
is a key component of the study of the Earth's climate. Historical
records of ocean data have been collected for the purpose of
understanding the properties of water masses and their changes in time.
They are also used to assess, initialize and constrain numerical models
of the climate. Sea surface temperature data from ocean samples have
been collected by voluntary observing ships, buoys, military and scientific
cruises for decades. During the last 20 years or so, this wealth of
data has been complemented by regular streams of remotely sensed
observations from satellite orbiting the earth. Increasingly
sophisticated climatological research requires, not only the
description of the mean state and the relevant trends in ocean data,
but also a careful quantification of the  data variability at
different spatial and temporal scales. A number of articles have
appeared to address this issue in recent years, see e.g.
\cite{higdon1998process}, \cite{lemos2009spatio},
\cite{lemos2006spatio}, \cite{berliner2000long}.

This article considers the problem of capturing the spatial trend and
characterizing the uncertainties in the sea surface
temperature (SST) in the West coast of mainland USA, Canada and Alaska between $30^0-60^0$ N. latitude and $122^0-152^0$ W. longitude.
The dataset is obtained from  NODC World Ocean Database 2016 and we use
the data collected in the month of October for all the spatial
locations. Note that, for this example, we ignore the temporal
component. We perform screening of the data to
ensure quality control and then choose a random subset of $113,412$
spatial observations over the domain of interest. Out of the total
observations, about 90\%, i.e $100,000$ observations are used for model
fitting and rest are used for prediction. We replicate this procedure
$5$ times to eliminate any chance factor in our analysis. The domain of
interest is large enough to allow considerable spatial variation in SST
from north to south and provides an important first step to extend
these models for the analysis of global scale SST database.

The plot of the sea surface temperature along with coastal lines of
Western United States and Canada is shown in Figure~\ref{data-figs:a1}.
The data show a clear decreasing SST trend with increasing
latitude. Consequently, we add latitude and longitude as linear
predictors to explain the long-range directional variability in the
SST. We fitted a non-spatial model with latitude and longitude as linear
predictors using ordinary least square (OLS). The resulting residuals
are shown in Figure~\ref{data-figs:a2}. The residual plot reveals
spatial dependence with no obvious pattern of aniosotropy. Thus a
multiscale DCT model with latutude and longitude as predictors seem to
be a desirable model for this data.

\begin{figure}[!th]
  \begin{center}
    \subfigure[Sea surface temperature]{\includegraphics[height=5cm,width=5cm]{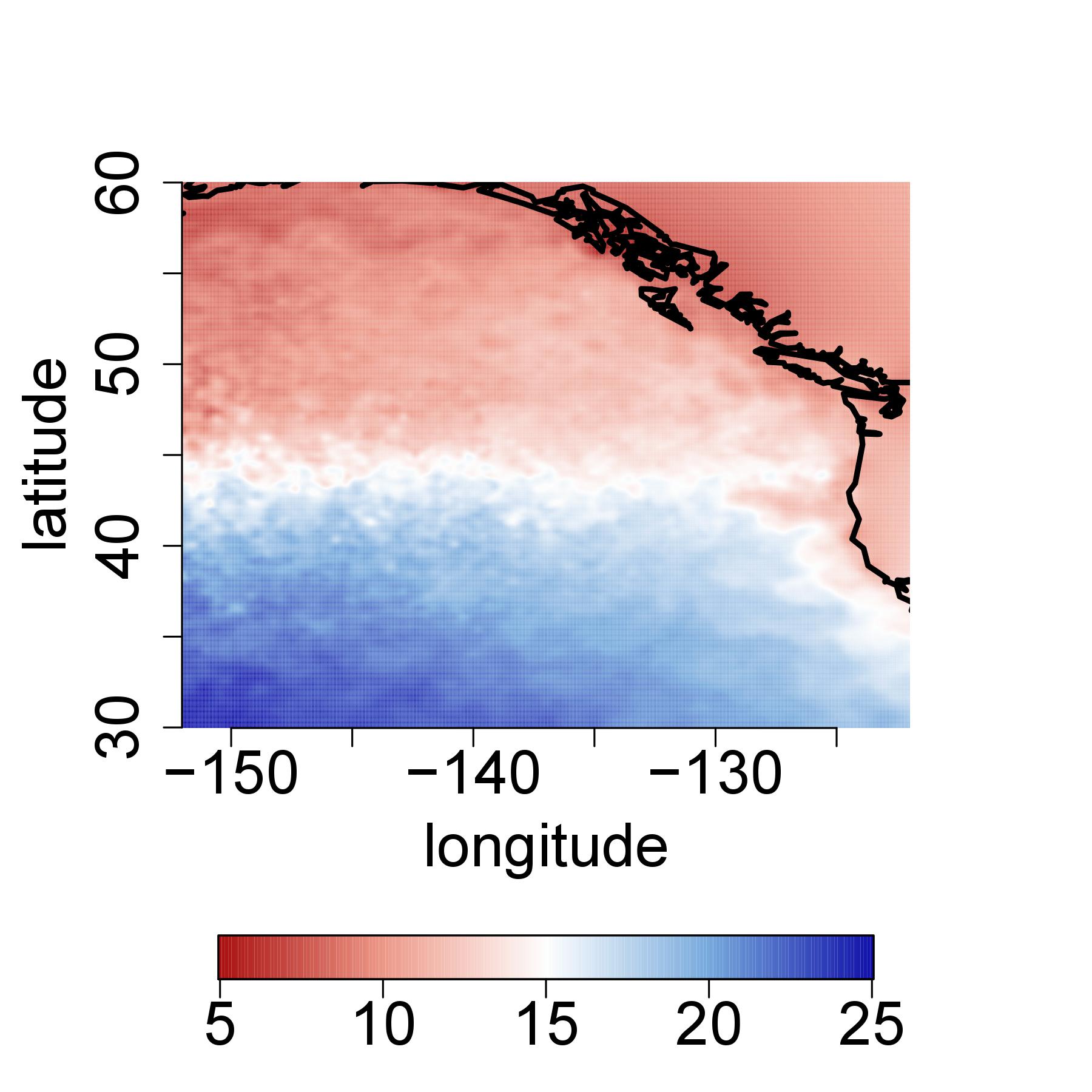}{\label{data-figs:a1}}}\\
    \subfigure[OLS residual]{\includegraphics[height=5cm,width=5cm]{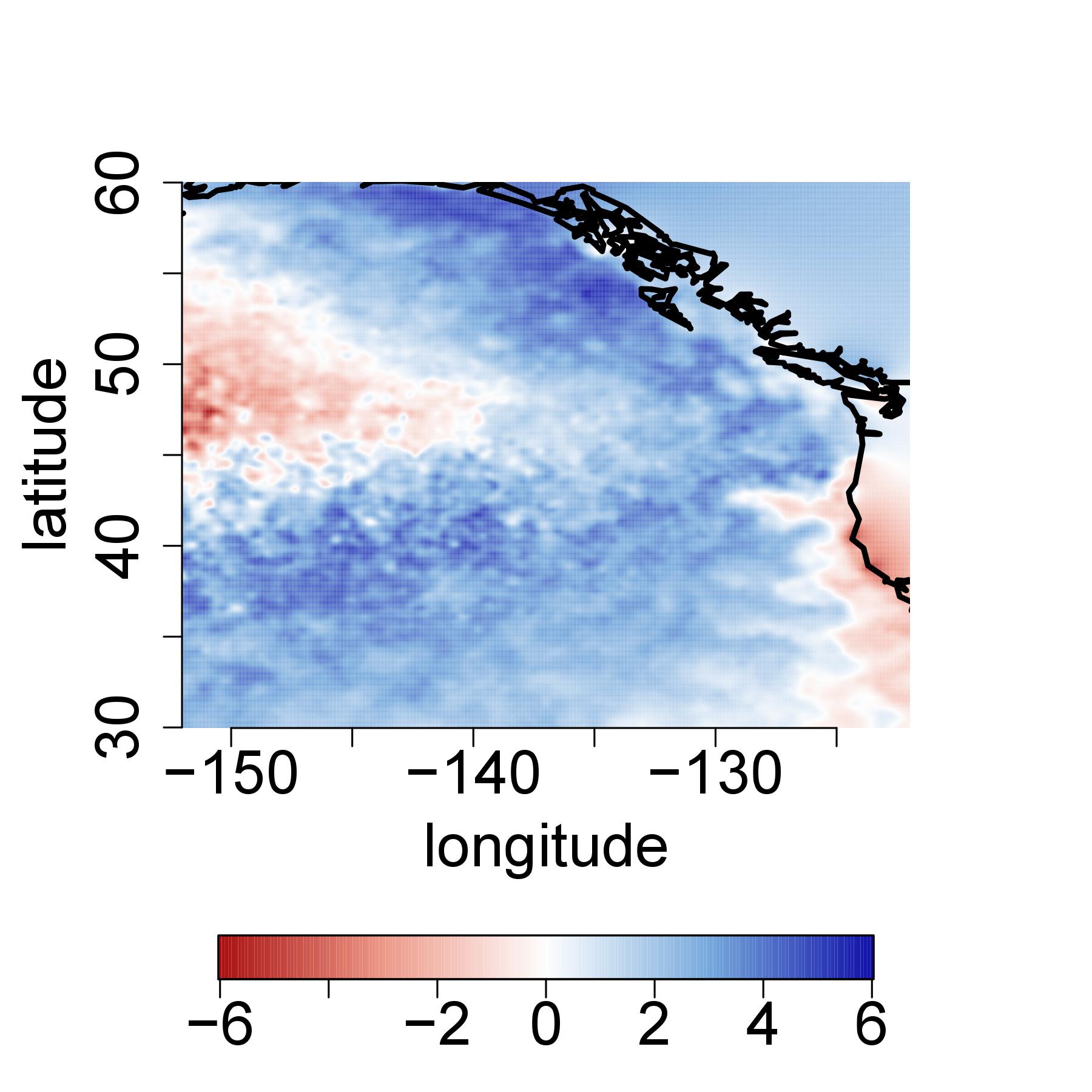}{\label{data-figs:a2}}}
    \subfigure[laGP predicted surface]{\includegraphics[height=5cm,width=5cm]{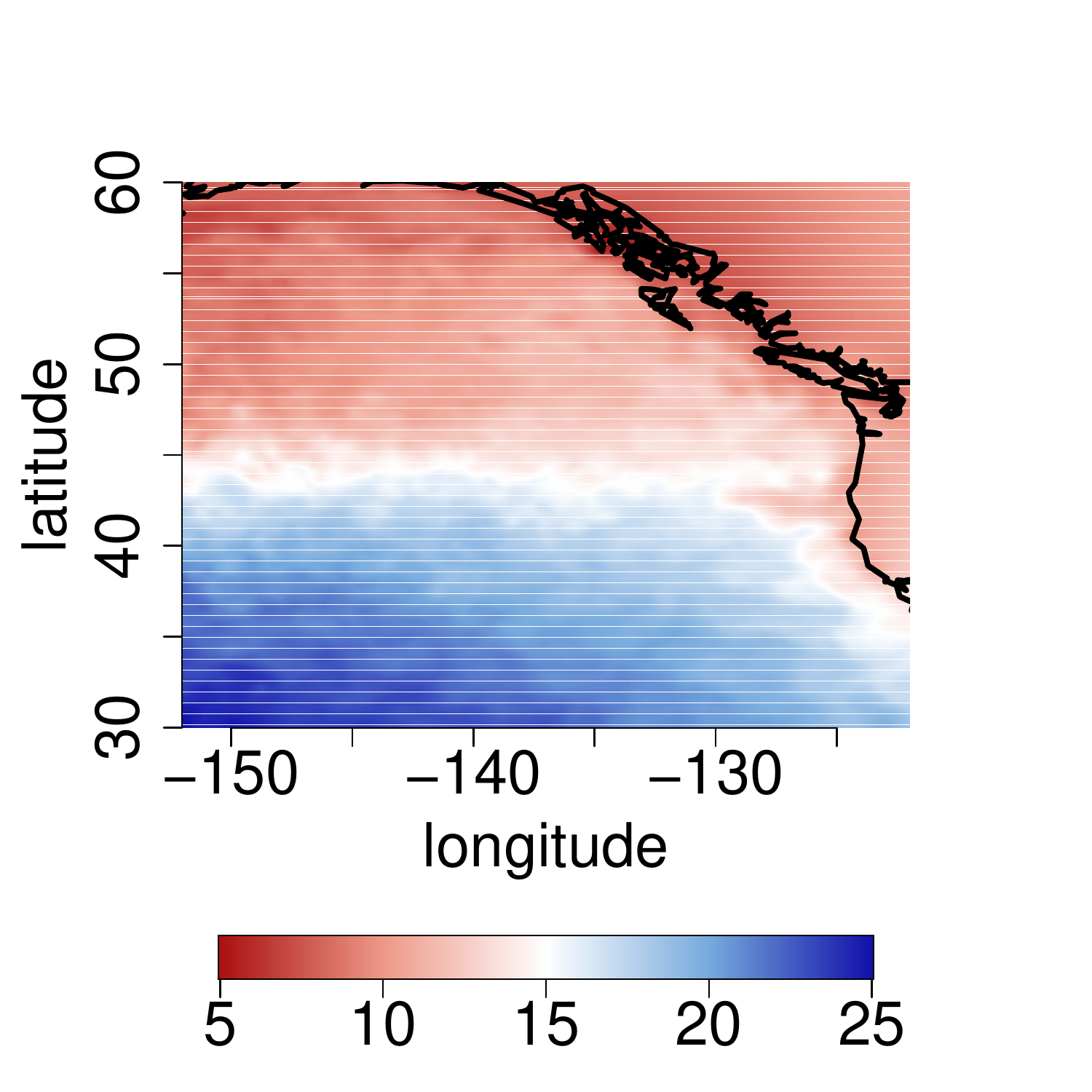}{\label{data-figs:a4}}}
    \subfigure[LatticeKrig predicted surface]{\includegraphics[height=5cm,width=5cm]{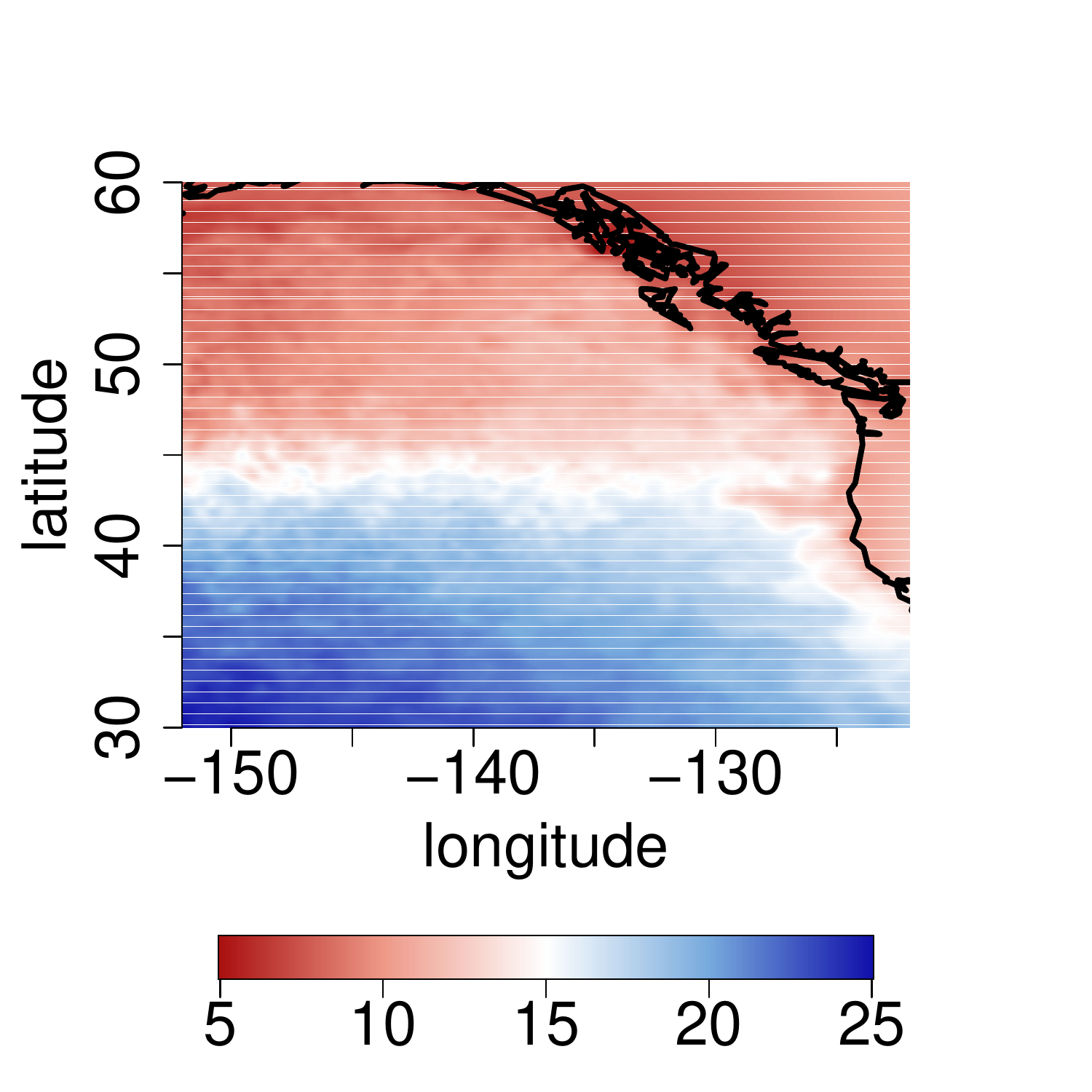}{\label{data-figs:d1}}}\\
    \subfigure[MDCT lower 95\% PI]{\includegraphics[height=5cm,width=5cm]{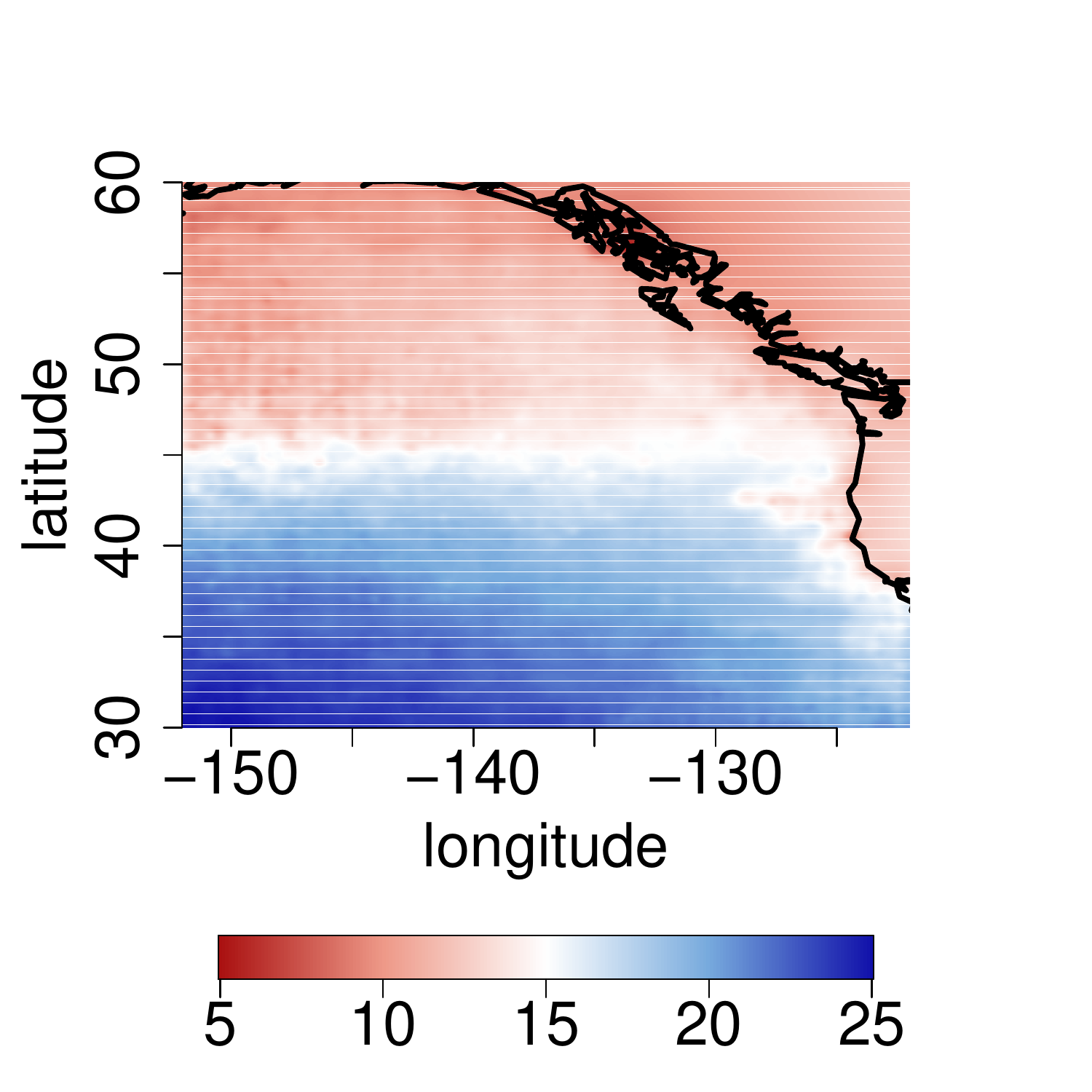}{\label{data-figs:a5}}}
    \subfigure[MDCT upper 95\% PI]{\includegraphics[height=5cm,width=5cm]{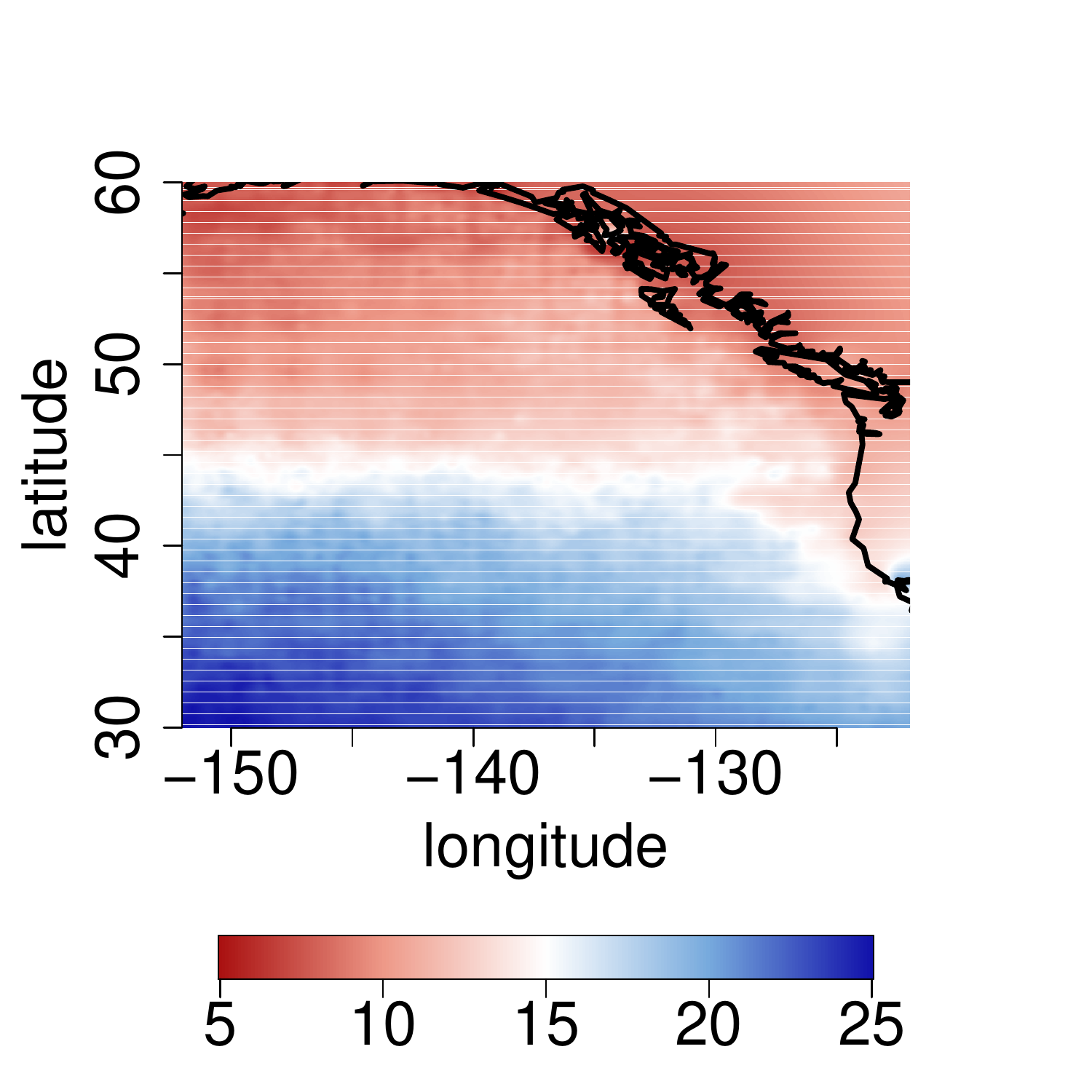}{\label{data-figs:a6}}}
    \subfigure[MDCT predicted surface]{\includegraphics[height=5cm,width=5cm]{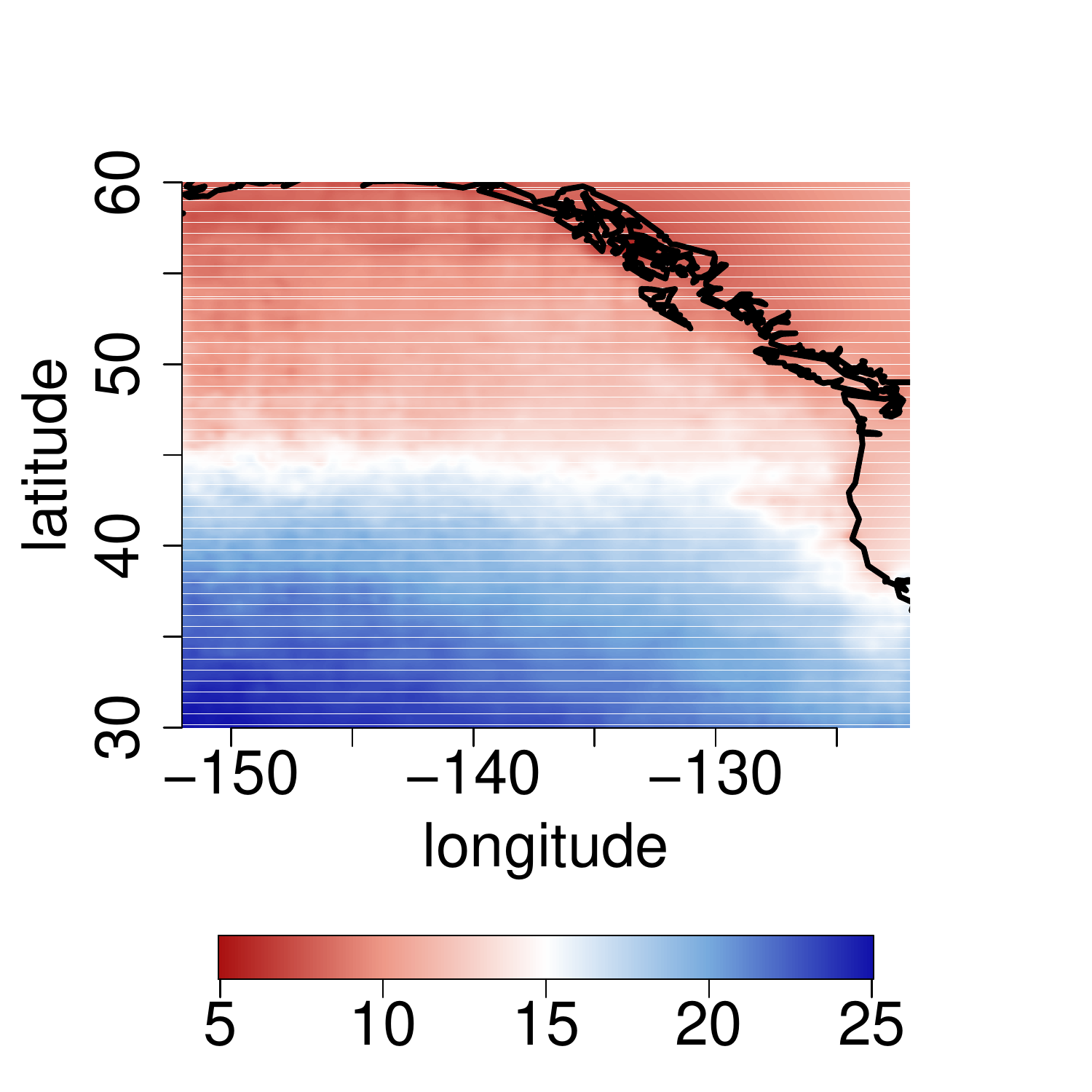}{\label{data-figs:c1}}}
  \end{center}
\caption{(a) Sea surface temperature in October 2016 for a portion of the
North Pacific. (b) Estimated OLS residual from the non-spatial. Panels
(c), (d) and (g) show the estimated mean predictive surfaces for three
competing models. Figures (e) and (f) present the point-wise predictive
bands for the MDCT.}
\label{data}
\end{figure}

The proposed MDCT model for the training data uses $R=3$ resolutions
with the first resolution having $J(1)=100$ knots. To minimize edge
effect, some knots are also kept inside land. Similar to the simulation
studies, we proposed a IG(2,1) prior on $\sigma^2$ and multiscale tree
shrinkage prior on $\beta_j^r$'s with the same hyper-parameters
described in (\ref{multsh}). We implement Algorithm~\ref{post_comp_al}
and run it for $2000$ iterations to discover that $\eta=1$ appearing
overwhelmingly among $2000$ iterations. Thus to reduce unnecessary storage complexity
and computational ease for dataset of this scale, we run the rest of the iterations with $\eta=1$.
The model thereafter is run for $5000$
iterations, convergence diagnostics is performed with the \texttt{coda}
package in R which indicates that $2000$ iterations are sufficient as
burn-in, to achieve practical convergence. As competitors to the MDCT,
we fitted LatticeKrig and LaGP to the data. MPP is computationally
prohibitive for the size of the dataset and is omitted from the
comparison.

\begin{table}[!th]
\begin{center}
\caption{Mean squared prediction error (MSPE), length and coverage of 95\% predictive intervals of MDCT, MDCT(1), MDCT(2), laGP and LatticeKrig. }\label{Tab1:pred_data}
\begin{tabular}
[c]{cccccc}%
\hline
 & MDCT & MDCT(1) & MDCT(2) & laGP & LatticeKrig\\
\hline
MSPE & 0.18 & 0.52 & 0.36 & 0.11 & 0.10\\
Length of 95\% PI & 2.49 & 2.38 & 2.42 & 1.26 & 0.48\\
Coverage of 95\% PI & 0.98 & 0.95 & 0.97 & 0.93 & 0.63\\
\hline
\end{tabular}
\end{center}
\end{table}

  The predictive power of the proposed model, along with that of its
competitors, is assessed based on mean squared prediction error (MSPE),
coverage and length of 95\% predictive intervals. The non-spatial model
and MDCT yield MSPE $1.34$ and $0.18$ respectively. The dramatic improvement in
MSPE due to the inclusion of a spatial structure, that is evident from
Table~\ref{Tab1:pred_data} corroborates the fact that there is a strong
spatial dependence in the field that can not be explained by a linear
effect of longitude and latitude. From the results in
Table~\ref{Tab1:pred_data} we observe that LaGP has a slightly better
performance than MDCT, in terms of MSPE. However, MDCT yields wider
predictive intervals, leading to 98\% coverage, while laGP produces
some under-coverage with a narrower predictive interval. This can be an
indication of overprediction. The smallest MSPE in the table
corresponds to LatticeKrig. Nevertheless, we observe that LatticeKrig
suffers heavily in characterizing predictive uncertainty. Overall, MDCT
turns out to be a competitive performer in predictive inference.
   Predictive surfaces in Figure~\ref{data} further corroborate this
fact. Importantly, even with non-parallel implementation MDCT takes
about $26$ seconds to run one iteration. As shown in Figure~\ref{comp}, the computation
time can be reduced by multiple folds through efficient parallel implementation. On the contrary, even the frequentist implementation
of LatticeKrig takes about 2 hours. Fitting MDCT beyond $R=3$
unnecessarily exacerbates computational burden with minimal improvement
of inferential and predictive performance.

\section{Conclusion}\label{conclusion}
This article proposes a novel multiscale kriging model for spatial datasets.
The model writes the unknown spatial surface as an independent sum of processes
at different scales and is able to approximate a broad class of both 1D and 2D spatial processes
of various degree of smoothness. One key ingredient of our mutiscale model is the kernel
convolution with a compactly supported kernel of minimal degree and knots placed in a regular
grid at every resolution. Theoretically, it allows us to completely characterize the space of functions
generated from the mutiscale spatial model. Another important contribution of the
current article is to propose a new class of \emph{multiscale tree shrinkage prior}
distribution for the basis coefficients. The construction of a tree shrinkage prior
is governed by the consideration that as the model moves to the higher resolutions,
more and more basis coefficients become irrelevant. The idea of multiscale
tree shrinkage prior is novel in statistics and can find applications outside of
spatial analysis. We have been able to show asymptotic convergence properties of the
posterior distribution to rigorously argue consistent surface estimation by the proposed
model.

Besides the important methodological and theoretical contributions that the proposed model
entails, there is an equally important contribution in computational efficiency for massive datasets.
The research on multiscale spatial models is largely motivated by the quest of building
a complex and flexible spatial model that allows accurate spatial inference and prediction
for massive datasets and yet allows rapid Bayesian computation. The compactly supported kernel
together with the multiscale shrinkage priors allow easy MCMC of the model parameters. We develop
a strategy for posterior computation within our modeling paradigm that ensures computation of
the model parameters locally. More specifically, our strategy requires inverting a large number of $[((2d)^R-1)/(2d-1)]\times[((2d)^R-1)/(2d-1)]$
matrices in parallel at every MCMC iteration, leading to unprecedented speed in computation for Bayesian spatial models.

Several interesting new directions open up from this article. First of all, the current framework of mutiscale Bayesian modeling of spatial datasets can readily be extended to spatio-temporal datasets. Secondly, the recent idea of spatial meta kriging (Guhaniyogi and Banerjee, 2016)
allows scalability by fitting a spatial model independently on partitions of a big data followed by combining the inferences. It is established in this article that the proposed multiscale framework can scale up to $\approx$ 1-2 million spatial locations, but may struggle with tens of million. If we have resources to run on $\approx 15$ different subsets, then SMK combined with our approach can yield full Bayesian inference on $\approx$ 30 million locations. Finally, this article proposes one specific rectangular partition of the domain. There is a scope of future research as to how adaptive partitioning of the domain is to be implemented using techniques such as the Voronoi tesselation. Adaptive partitioning with the appropriate placement of knots might significantly reduce the number of knots required to yield acceptably accurate inference. We will explore these approaches in future.

\section*{Appendix}
\textbf{Proof of Theorem~\ref{RKHS}:}

\noindent
Use the fact that $\kappa$ is a compactly supported polynomial of minimal degree for two
dimensions that possesses continuous derivatives upto second order. By Theorem 10.10 and 10.35 in \cite{wendland2004scattered}, we obtain that the Fourier transform of $\kappa$, denoted by $\hat{\kappa}$ satisfies
\begin{align*}
c_1(1+||\omega||_2)^{-d-3}\leq \hat{\kappa}(\omega)\leq c_2(1+||\omega||_2)^{-d-3},
\end{align*}
for some $c_1,c_2>0$. The result now follows using Corollary 10.13 in \cite{wendland2004scattered}.

\noindent\textbf{Proof of Theorem~\ref{consistency}:}

\noindent We begin by stating and proving a lemma that will be useful in the proof of the theorem.
\begin{lemma}\label{prior_positivity}
Consider a ball of radius $\delta$ around $(w_0,\sigma_0^2)$ given by
\begin{align*}
B_{\delta}(w_0,\sigma_0^2)=\left\{(w,\sigma^2): ||w-w_0||_{\infty}<\delta,\left|\frac{\sigma^2}{\sigma_0^2}-1\right|<\delta\right\}.
\end{align*}
Then $\pi(B_{\delta}(w_0,\sigma_0^2))>0$, for all $\delta>0$.
\end{lemma}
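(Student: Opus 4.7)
Because the prior on $(w,\sigma^2)$ is a product prior, write
\[
\pi\!\left(B_\delta(w_0,\sigma_0^2)\right)\;=\;\pi_w\!\left(\{w:\|w-w_0\|_\infty<\delta\}\right)\cdot \pi_{\sigma^2}\!\left(\{\sigma^2:|\sigma^2/\sigma_0^2-1|<\delta\}\right),
\]
and show each factor is strictly positive. The $\sigma^2$ factor is immediate: $\sigma^2\sim IG(c,d)$ has a density that is strictly positive on $(0,\infty)$, so the open interval around $\sigma_0^2$ determined by $\delta$ has positive mass. The work lies in the $w$-factor.

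For the $w$-factor, my plan is to reduce, in two approximation steps, to a statement about finitely many coefficients. First, since $w_0\in\Theta_{\zeta,c}$, by definition there exists some $w^\star\in\Theta_\zeta$ with $\|w^\star-w_0\|_\infty<\delta/3$. Second, the representation $w^\star=\sum K(\cdot,\bs_j^{r,\star},\phi_r)\beta_j^{r,\star}$ may involve knots not on the prior's grid, but since the gridded knots become dense in $\mathcal{D}$ as the resolution index grows and the kernels $K(\cdot,\cdot,\phi_r)$ are continuous, a second approximation yields $\tilde w$ using only prior-grid basis functions with $\|\tilde w-w^\star\|_\infty<\delta/3$. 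Write $\tilde w=\sum_{(j,r)\in S^\star}\tilde\beta_j^{r}K(\cdot,\bs_j^r,\phi_r)$ for a finite index set $S^\star$, and extend the coefficient vector by zero outside $S^\star$.

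It then suffices to show the prior assigns positive mass to $\{\|w-\tilde w\|_\infty<\delta/3\}$. Decompose $w-\tilde w=A+B$, where $A$ aggregates coordinates $(j,r)\in S^\star$ and $B$ aggregates those outside $S^\star$. For $A$: by linearity and boundedness of the $K$'s on the compact $\mathcal{D}$, a sup-norm bound $\|A\|_\infty<\delta/6$ follows from each $|\beta_j^r-\tilde\beta_j^r|$ being smaller than some $\eta>0$ determined by $|S^\star|$ and the kernel uniform norms. Since the multiscale tree shrinkage prior makes each $\beta_j^r$ a scale mixture of Gaussians with Gamma-mixed scale, its marginal density is strictly positive on $\mathbb{R}$, so the joint prior on the finitely many coordinates in $S^\star$ has a density strictly positive on $\mathbb{R}^{|S^\star|}$; the box event has positive prior mass. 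For $B$: apply Markov/Chebyshev to a sup-norm bound for the tail contribution, using that $\mathrm{Var}(\beta_j^r)=(c-1)^{-(r-1)}$ decays geometrically in $r$ a priori, which makes $\Pr(\|B\|_\infty<\delta/6)>0$. Combining these, the intersection has positive prior probability, which completes the argument.

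The main obstacle I anticipate is controlling the tail contribution $B$: the number of basis functions at resolution $r$ grows like $(2d)^{r-1}$, while the variance shrinks like $(c-1)^{-(r-1)}$, so naive moment bounds need not be summable for small $c$. This is where the compact support of the Wendland kernel \eqref{eq:kernel} is essential: at any fixed location $\bs$, only a bounded (resolution-independent) number of knots $\bs_j^r$ satisfy $\|\bs-\bs_j^r\|<\phi_r$, so a pointwise bound on $|B(\bs)|$ involves only finitely many coefficients per resolution. A covering/union-bound argument over $\mathcal{D}$, combined with the uniform continuity of the kernel, upgrades this pointwise bound to a sup-norm bound; the geometric decay of $\mathrm{Var}(\beta_j^r)$ then makes the total variance summable and yields the required $\Pr(\|B\|_\infty<\delta/6)>0$.
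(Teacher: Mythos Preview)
Your overall strategy---approximate $w_0$ by a finite kernel expansion using the closure hypothesis, transfer this to the prior's grid via uniform continuity of $K$ on the compact domain, and then invoke strict positivity of the prior density on the finitely many relevant coefficients---is exactly the route the paper takes. The $\sigma^2$ factor and the positive-density argument for the coefficient box are handled identically.

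The difference is that the paper does not split $w-\tilde w$ into an active part $A$ and a tail $B$. It simply takes a single box
\[
\mathcal{I}=\Bigl\{\{\beta_j^r\}:\ |\beta_j^r-\beta_j^{r\star}|<\tfrac{\delta}{4M\sum_{r=1}^{R^\star} J(r)}\Bigr\}
\]
over \emph{all} coefficients at once (with $\beta_j^{r\star}=0$ for indices not used in the approximant), and observes that this box has positive prior mass because each coordinate has a strictly positive density on $\mathbb{R}$. This works because in the paper's model $R$ is fixed and each $J(r)$ is finite, so the total number of basis coefficients is finite; there is no infinite tail to control.

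Consequently, the machinery you develop for $B$---exploiting the compact support of the Wendland kernel to bound the number of active knots at each $\bs$, a covering argument to pass from pointwise to sup-norm control, and summability via the geometric decay $\mathrm{Var}(\beta_j^r)=(c-1)^{-(r-1)}$---is correct in spirit but unnecessary here. The ``obstacle'' you anticipated (growth of $J(r)$ versus variance decay) never materializes under a finite-$R$ prior: your $B$ is just another finite collection of coordinates targeted to zero, and the same positive-density box argument that handles $A$ disposes of it. Your refinement would become essential only if one extended the prior to infinitely many resolutions, which is not the setting of the lemma as stated.
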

\begin{proof}
Since $w_0\in\Theta_c$, $\exists w_{*}(\bs)=\sum_{r=1}^{R^*}\sum_{j=1}^{J(r)}K(\bs,\bs_j^{r*},\phi_r)\beta_{j}^{r*}$, s.t.
$||w_{*}-w_0||_{\infty}<\delta/2$. Note that $K(\cdot,\cdot,\phi_r)$ is a continuous function on a compact set $\mathcal{D}$, implying
$K(\cdot,\cdot,\phi_r)$ to be a uniformly continuous function. Thus, $\exists$ $M$, s.t.
$M=\sup\limits_{\bs\in\mathcal{D}}\max\limits_{r=1,..,R^*;j=1:J(r)}|K(\bs,\bs_j^{r*},\phi_r)|$. Assume further that $\eta=\sum_{r=1}^{R^*}\sum_{j=1}^{J(r)}|\beta_{j}^{r*}|$. Since $K$ is uniformly continuous, one can choose $\bs_j^{r}$'s such that
$\sup\limits_{\bs\in\mathcal{D}}|K(\bs,\bs_j^r,\phi_r)-K(\bs,\bs_j^{r*},\phi_r)|<\frac{\delta}{4\eta\sum_{r=1}^{R^*}J(r)}$. Define the set
\begin{align*}
\mathcal{I}=\left\{\{\beta_j^r\}: |\beta_j^r-\beta_j^{r*}|<\frac{\delta}{4M\sum_{r=1}^{R^*}J(r)}\right\}.
\end{align*}
Clearly, for the set of all $w(\bs)=\sum_{r=1}^{R^*}\sum_{j=1}^{J(r)}K(\bs,\bs_j^{r},\phi_r)\beta_{j}^{r}$, with $\bs_j^{r}$ is chosen as above and $\beta_j^r$ chosen from $\mathcal{I}$, we have
\begin{align*}
|w_0(\bs)-w(\bs)| &\leq |w_0(\bs)-w_{*}(\bs)|+|w_{*}(\bs)-w(\bs)|\\
&\leq\frac{\delta}{2}+\sum_{r=1}^{R^*}\sum_{j=1}^{J(r)}|K(\bs,\bs_j^r,\phi_r)||\beta_j^r-\beta_j^{r*}|+\sum_{r=1}^{R^*}\sum_{j=1}^{J(r)}|\beta_j^{r*}|
|K(\bs,\bs_j^r,\phi_r)-K(\bs,\bs_j^{r*},\phi_r)|\\
&\leq \frac{\delta}{2}+\frac{\delta M\sum_{r=1}^{R^*}J(r)}{4M\sum_{r=1}^{R^*}J(r)}+\frac{\delta \eta\sum_{r=1}^{R^*}J(r)}{4\eta\sum_{r=1}^{R^*}J(r)}=\delta.
\end{align*}
Thus $\mathcal{I}\times\left\{\sigma^2:\left|\frac{\sigma^2}{\sigma_0^2}-1\right|<\delta\right\}\subseteq B_{\delta}(w_0,\sigma_0^2)$. Since, the prior on all $\beta_j^r$ are continuous on the entire real line and the prior on $\sigma^2$ is also continuous on $\mathcal{R}^{+}$, it trivially holds that $\pi(B_{\delta}(w_0,\sigma_0^2))\geq \pi\left(\mathcal{I}\times\left\{\sigma^2:\left|\frac{\sigma^2}{\sigma_0^2}-1\right|<\delta\right\}\right)>0$. This concludes the proof of the lemma.
\end{proof}
We will now proceed with the proof of Theorem~\ref{consistency}. Our aim is to check that all conditions of Theorem in \cite{choi2007posterior} are satisfied. Let $H_i=\frac{N\left(y_i|w_0(\bs_i),\sigma_0^2\right)}{N\left(y_i|w(\bs_i),\sigma^2\right)}$, and $K_i(w,w_0)=\mbox{E}(H_i)$ and $V_i(w,w_0)=\mbox{Var}(H_i)$. It is easy to check that (\cite{choi2007posterior})
\begin{align*}
K_i(w,w_0)&=\frac{1}{2}\log\frac{\sigma^2}{\sigma_0^2}-\frac{1}{2}\left(1-\frac{\sigma_0^2}{\sigma^2}\right)+\frac{1}{2}\frac{(w(\bs)-w_0(\bs))^2}{\sigma^2}\\
V_i(w,w_0) &=\frac{1}{2}\left(\frac{\sigma_0^2}{\sigma^2}-1\right)^2+\frac{\sigma_0^4}{\sigma^4}(w(\bs)-w_0(\bs))^2.
\end{align*}
Thus for every $\epsilon>0$, there exists a $\delta>0$ such that $(w(\cdot),\sigma^2)\in B_{\delta}(w_0,\sigma_0^2)$ implies
$K_i(w,w_0)<\epsilon,\forall\:i$ and $\sum_{i=1}^{\infty}\frac{V_i(w,w_0)}{i^2}<\infty$. Thus condition (i) is satisfied.
Condition (ii), i.e. the prior positivity has already been proved to be satisfied by Lemma~\ref{prior_positivity}.

Finally, the condition of having an exponentially consistent sequence of tests follows along the same line as the proof of Theorem 2 in \cite{choi2007posterior}. This concludes the theorem.
\newpage
\bibliographystyle{chicago}
\bibliography{bibliography_rev}

\end{document}